\date{March 6, 2023}
\newtheorem{theorem}{Theorem}%[section]
\newtheorem{lemma}[theorem]{Lemma}
\newtheorem{corollary}[theorem]{Corollary}
\theoremstyle{definition}
\theoremstyle{remark}
\newtheorem{remark}[theorem]{Remark}
\newcommand{\1}{\mathbbm{1}}
\newcommand{\dd}{\, \mathrm{d}}
\renewcommand{\epsilon}{\varepsilon}
\newcommand{\ii}{\mathrm{i}}
\newcommand{\N}{\mathbb{N}}
\renewcommand{\phi}{\varphi}
\newcommand{\R}{\mathbb{R}}
\newcommand{\Sph}{\mathbb{S}}
\newcommand{\Z}{\mathbb{Z}}
\DeclareMathOperator{\Tr}{Tr}
\DeclareMathOperator{\DeltaAnti}{\Delta_{{\bf as}}}
\begin{document}
	
	\title[Weighted CLR bounds
	]
	{Weighted CLR type bounds in two dimensions}
 
    \author{Rupert L. Frank} 
	\address[Rupert L. Frank]{Mathematisches Institut, Ludwig-Maximilans Universit\"at M\"unchen, Theresienstr. 39, 80333 M\"unchen, Germany, and Munich Center for Quantum Science and Technology, Schellingstr. 4, 80799 M\"unchen, Germany, and Mathematics 253-37, Caltech, Pasadena, CA 91125, USA}
	\email{r.frank@lmu.de}
    
    \author{Ari Laptev} 
	\address[Ari Laptev]{Department of Mathematics, Imperial College London, London SW7 2AZ, UK, and Sirius Mathematics Center, Sirius University of Science and Technology, 1 Olympic Ave, 354340, Sochi, Russia}
	\email{a.laptev@imperial.ac.uk}

    \author{Larry Read} 
	\address[Larry Read]{Department of Mathematics, Imperial College London, London SW7 2AZ, UK}
	\email{l.read19@imperial.ac.uk}
	\subjclass[2010]{Primary: 35P15; Secondary: 81Q10}
	
    \begin{abstract}
        We derive weighted versions of the Cwikel--Lieb--Rozenblum inequality for the Schr\"odinger operator in two dimensions with a nontrivial Aharonov--Bohm magnetic field. Our bounds capture the optimal dependence on the flux and we identify a class of long-range potentials that saturate our bounds in the strong coupling limit. We also extend our analysis to the two-dimensional Schr\"odinger operator acting on antisymmetric functions and obtain similar results.
    \end{abstract}
    
    \maketitle
    
\section{Introduction and main results}
The celebrated Cwikel--Lieb--Rozenblum (CLR) inequality states that the number $N(-\Delta-V)$ of negative eigenvalues, including multiplicity, of a Schr\"odinger operator $-\Delta-V$ in $L^2(\R^d)$ in dimension $d\geq 3$ is bounded by 
\begin{equation}\label{eqn:CLR}
    N(-\Delta-V)\lesssim_d \int_{\R^d}V(x)_{+}^{d/2}\dd x
\end{equation}
where the implied constant is independent of $V$. Here and throughout we take $a_{\pm}\coloneqq\max(0,\pm a)$ and use a subscript on $\lesssim$ to specify the variables on which the implied constant depends. The inequality is due to M.~Cwikel \cite{Cwikel1977WeakOperators}, E.~Lieb \cite{Lieb1976BoundsOperators} and G.~Rozenblum \cite{Rozenbljum1972DistributionOperators}. For further proofs and background we direct the reader to \cite{Frank2022SchrodingerInequalities}. The bound is saturated in the strong coupling limit, that is where $V$ is replaced with $\lambda V$ and $\lambda\rightarrow\infty$, since by Weyl's asymptotics,
\begin{equation}\label{eqn:Weyl}
    \lim_{\lambda\rightarrow\infty}\lambda^{-d/2}N(-\Delta-\lambda V)=\frac{\omega_d}{(2\pi)^d}\int_{\R^d}V(x)_{+}^{d/2}\dd x,
\end{equation}
where $\omega_d$ is the volume of the unit ball in $\R^d$. One of the uses of \eqref{eqn:CLR} is to extend this asymptotic behavior, which is originally established for instance for continuous $V$ of compact support, to all $V$ with $V_+ \in L^{d/2}(\R^d)$. Concerning the repulsive part one only needs to assume $V_-\in L^1_{\rm loc}(\R^d)$ \cite{Fr}.

Building on earlier work for radial potentials by V.~Glaser, H.~Grosse and A.~Martin \cite{Glaser1978BoundsOperator}, the CLR inequality was generalised by Y.~Egorov and V.~Kondratiev in \cite{Egorov1989OnOperator} to include the weighted bounds
\begin{equation} \label{eqn:weightedCLR}
    N(-\Delta-V)\lesssim_{d,\alpha}\int_{\R^d}V(x)_{+}^{(d+\alpha)/2} |x|^\alpha \dd x,
\end{equation}
which hold in dimensions $d\geq 3$ for any $\alpha> 0$. In \cite{Birman1988InterpolationOperator}, M.~Birman and M.~Solomyak showed that the strong $L^p$ norm appearing on the right in \eqref{eqn:weightedCLR} can be replaced by a weak norm, namely
\begin{equation} \label{eqn:weightedCLRweak}
    N(-\Delta-V)\lesssim_{d,\alpha} \sup_{t>0} t^{(d+\alpha)/2}\int_{|x|V(x)_{+}>t}\frac{\dd x}{|x|^d},
\end{equation}
which is valid, again, in dimensions $d\geq 3$ with $\alpha>0$. Note that the bounds \eqref{eqn:weightedCLR} and \eqref{eqn:weightedCLRweak} are homogeneous with respect to $V$ of degree $(d+\alpha)/2>d/2$, in contrast to the homogeneity $d/2$ of \eqref{eqn:CLR}. The latter homogeneity is consistent with \eqref{eqn:Weyl}. Nevertheless, as shown by M.~Birman and M.~Solomyak \cite{Birman1990NegativeDiscussion}, the asymptotic order of growth $(d+\alpha)/2$ in \eqref{eqn:weightedCLRweak} can be saturated in the strong coupling limit for a class of potentials with particular long range behaviour. Namely, if $V_{+}\in L^{d/2}_{\rm loc}(\R^d)$ satisfies
\begin{equation}\label{eqn:saturatingpotential}
    V(x)= |x|^{-2}|\ln|x||^{-1/p} \left( 1+ o(1) \right) \text{ as }|x|\rightarrow\infty
\end{equation}
for some $p>d/2$, then one can show that 
\begin{equation*}
    \lim_{\lambda\rightarrow\infty}\lambda^{-p}N(-\Delta-\lambda V) \text{ exists and is finite,}
\end{equation*}
while for $\alpha>0$ with $p=(d+\alpha)/2$, 
\begin{equation*}
    \lim_{\lambda\rightarrow\infty}\lambda^{-p}\sup_{t>0}t^{(d+\alpha)/2}\int_{\lambda |x|^2V(x)_+>t}\frac{\dd x}{|x|^d}=\sup_{t>0} t^{p}\int_{ |x|^2V(x)_+>t}\frac{\dd x}{|x|^d} \in (0,\infty).
\end{equation*}

All the results discussed so far are restricted to the case of dimensions $d\geq 3$ and most of their direct analogues in dimensions $d=2$ fail. For instance, none of the direct analogues of \eqref{eqn:CLR}, \eqref{eqn:weightedCLR} and \eqref{eqn:weightedCLRweak} hold. Moreover, there are examples of $V\in L^1(\R^2)$ with $V\geq 0$ for which either the limit on the left side of \eqref{eqn:Weyl} is infinite or it is finite but different from the right side, see \cite{Birman1996TheOperator}. Recently, there has been a lot of activity in proving bounds on $N(-\Delta-V)$ in $d=2$ and in giving necessary and sufficient conditions for either the bound $\lim_{\lambda\rightarrow \infty}\lambda^{-1}N(-\Delta-\lambda V)<\infty$ or the validity of \eqref{eqn:Weyl}. A sample of references for this development is \cite{KMW,MoVa,Shargorodsky2014OnOperators,GrNa,LaSo1,LaSo2}. An earlier fundamental paper is due to M.~Solomyak \cite{So}; see also \cite{FrLa}.

In this paper we are concerned with bounds on the number of negative eigenvalues of two-dimensional Schr\"odinger operators in the presence of an Aharonov--Bohm magnetic field. We will see that when this field is nontrivial, one obtains inequalities that are analogous to those discussed above for Schr\"odinger operators in dimensions $d\geq3$ and see that the difficulties of the two-dimensional case mostly disappear. We will also consider the case of the non-magnetic Schr\"odinger operator restricted to antisymmetric functions and see that this case is similar to that of an Aharonov--Bohm magnetic field.

Our results support the heuristics that the different behaviour in dimensions $d\geq 3$ and in $d=2$ comes from a spectral instability of the two-dimensional Laplacian near energy zero and that this instability can be removed by additional repulsion, either in the form of a magnetic field or the presence of symmetries. For other instances of this principle see \cite{LaNe,Laptev2022CalogeroDimensions}.

\medskip

To be more specific, let
$$
{\bf A}(x)= |x|^{-2}(x_2,-x_1)
\qquad\text{for all}\ x=(x_1,x_2)\in\R^2
$$ 
and for $\Phi\in\R$ let
$$
D_\Phi = -i\nabla +\Phi {\bf A} \,.
$$
We consider the magnetic Schr\"odinger operators
\begin{equation*}
	D_\Phi^2-V \text{ in }L^2(\R^2) \,.
\end{equation*}
As discussed in the next section, under suitable conditions on $V$ this operator can be realized as a self-adjoint operator via the closure of the corresponding quadratic form on $C_0^\infty(\R^2\backslash\{0\})$. When $\Phi\in\Z$, the magnetic potential can be gauged away and the operator is unitarily equivalent to $-\Delta+V$. Therefore, in the following we will concentrate on the case $\Phi\in\R\setminus\Z$.

An analogue of the CLR inequality \eqref{eqn:CLR} was shown by A.~Balinsky, W.~Evans and R.~Lewis \cite{Balinsky2001OnField}, namely,
\begin{equation}\label{eqn:ABCLRBalinsky}
    N(D_\Phi^2-V)\lesssim_{\Phi} \int_0^\infty \sup_{\omega\in\Sph}V(r\omega)_{+}r\dd r.
\end{equation}
More recently it was deduced in \cite{Laptev2022CalogeroDimensions} that when $V_+$ is radially non-increasing one can replace the supremum over angles in the right side of \eqref{eqn:ABCLRBalinsky} with an integral, that is,
\begin{equation}\label{eqn:ABCLRLaptev}
    N(D_\Phi^2 -V)\lesssim_{\Phi}\int_{\R^2} V(x)_+\dd x.
\end{equation}
However, it is known \cite{Balinsky2001OnField} that this replacement cannot be made for general $V\in L^1(\R^2)$.

Our main result is the following magnetic version of \eqref{eqn:weightedCLR}.

\begin{theorem}\label{thm:ABweightedCLR}
    Let $\Phi\in\R\setminus\Z$ and $\alpha>0$. Then there is a constant $C_{\Phi,\alpha}<\infty$ such that
    \begin{equation}\label{eqn:ABweightedCLR}
        N(D_\Phi^2-V)\leq C_{\Phi,\alpha} \int_{\R^2}V(x)_+^{1+\alpha/2}|x|^{\alpha}\dd x
    \end{equation}
	for all $V\in L^1_{\rm loc}(\R^2)$ for which the right side is finite. Moreover, the optimal constant in this inequality satisfies
	\begin{equation}
		\label{eq:abequiv}
			C_{\Phi,\alpha} \sim_\alpha d(\Phi)^{-1-\alpha}
	\end{equation}
	with $d(\Phi):=\min_{k\in\Z} |\Phi-k|$.
\end{theorem}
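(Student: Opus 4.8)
The plan is to remove the magnetic flux by passing to logarithmic radial coordinates, turning the Aharonov--Bohm problem into a Schr\"odinger operator on a cylinder with a spectral gap, and then to run a Birman--Schwinger argument in which the Araki--Lieb--Thirring inequality collapses everything onto a single explicit, flux-dependent constant. First I would record the change of variables $r=e^t$. Writing a function on $\R^2\setminus\{0\}$ in polar coordinates $(r,\theta)$ and setting $\phi(t,\theta)=\psi(e^t\cos\theta,e^t\sin\theta)$ gives a linear isomorphism of form domains under which, by a direct computation,
\[
\int_{\R^2}\big(|D_\Phi\psi|^2-V|\psi|^2\big)\,dx=\int_{\R\times\Sph^1}\big(|\partial_t\phi|^2+|(-i\partial_\theta-\Phi)\phi|^2-\mathcal{W}|\phi|^2\big)\,dt\,d\theta,
\]
where $\mathcal{W}(t,\theta):=e^{2t}V(e^t\cos\theta,e^t\sin\theta)$. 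Since the number of negative squares of a quadratic form is a congruence invariant (Sylvester's law of inertia, equivalently the variational principle), this yields $N(D_\Phi^2-V)=N(\mathcal{L}-\mathcal{W})$ with $\mathcal{L}:=-\partial_t^2+(-i\partial_\theta-\Phi)^2$ in $L^2(\R\times\Sph^1)$. The operator $\mathcal{L}$ is translation invariant on the cylinder, its angular part has eigenvalues $(m-\Phi)^2$, and hence $\mathcal{L}\geq d(\Phi)^2>0$; this positivity is the magnetic Hardy inequality and is exactly what $\Phi\notin\Z$ buys us. The same substitution gives the identity $\int_{\R\times\Sph^1}\mathcal{W}_+^{1+\alpha/2}\,dt\,d\theta=\int_{\R^2}V_+^{1+\alpha/2}|x|^\alpha\,dx$, so it suffices to prove a CLR bound for $\mathcal{L}-\mathcal{W}$ with the correct dependence on the gap.

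For the upper bound I would take $V=V_+\geq0$ and set $p:=1+\alpha/2>1$. Since $\mathcal{L}$ is invertible, the Birman--Schwinger principle gives $N(\mathcal{L}-\mathcal{W})=n_+(1;K)$ with $K:=\mathcal{W}^{1/2}\mathcal{L}^{-1}\mathcal{W}^{1/2}\geq0$, and the elementary bound $n_+(1;K)\leq\Tr(K^p)$ together with the Araki--Lieb--Thirring inequality $\Tr\big[(\mathcal{W}^{1/2}\mathcal{L}^{-1}\mathcal{W}^{1/2})^p\big]\leq\Tr\big[\mathcal{W}^{p/2}\mathcal{L}^{-p}\mathcal{W}^{p/2}\big]$ reduces the problem to the diagonal of $\mathcal{L}^{-p}$. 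By translation invariance this diagonal is the constant
\[
\mathcal{L}^{-p}(z,z)=\frac{1}{(2\pi)^2}\sum_{m\in\Z}\int_{\R}\frac{dk}{(k^2+|m-\Phi|^2)^p}=\frac{B_p}{(2\pi)^2}\sum_{m\in\Z}|m-\Phi|^{-(1+\alpha)},\qquad B_p:=\int_{\R}\frac{du}{(1+u^2)^p},
\]
so that $N(\mathcal{L}-\mathcal{W})\leq\mathcal{L}^{-p}(z,z)\int_{\R\times\Sph^1}\mathcal{W}_+^p\,dt\,d\theta$. The series converges exactly because $\alpha>0$ (its exponent $2p-1=1+\alpha$ exceeds $1$), and since it is $1$-periodic in $\Phi$ with nearest-integer term $d(\Phi)^{-(1+\alpha)}$ and a uniformly bounded remainder, $\sum_{m}|m-\Phi|^{-(1+\alpha)}\sim_\alpha d(\Phi)^{-1-\alpha}$. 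Combined with the integral identity this is \eqref{eqn:ABweightedCLR} with $C_{\Phi,\alpha}\lesssim_\alpha d(\Phi)^{-1-\alpha}$.

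For the matching lower bound in \eqref{eq:abequiv} I would test against a radial long-range potential living in the lowest channel. Taking $\mathcal{W}$ independent of $\theta$ makes $\mathcal{L}-\mathcal{W}$ block-diagonal, and restricting to the angular mode nearest $\Phi$, where the gap is exactly $d(\Phi)^2$, gives a lower bound by the number of eigenvalues of $-\partial_t^2-\mathcal{W}$ below $-d(\Phi)^2$. Choosing $\mathcal{W}=2d(\Phi)^2\,\1_{[0,T]}(t)$, i.e.\ $V(x)=2d(\Phi)^2|x|^{-2}$ on the annulus $1\leq|x|\leq e^{T}$, the resulting one-dimensional square well has $\sim T\,d(\Phi)/\pi$ eigenvalues below $-d(\Phi)^2$, while $\int V^p|x|^\alpha\,dx\sim d(\Phi)^{2+\alpha}T$; letting $T\to\infty$ shows the ratio is $\gtrsim_\alpha d(\Phi)^{-1-\alpha}$, so the constant cannot be improved. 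The conceptual obstacle throughout is the angular coupling for non-radial $V$, which would wreck any naive channel-by-channel summation; the device that removes it is the pairing of the cylinder reduction with the Araki--Lieb--Thirring step, which collapses the full two-dimensional, angle-dependent problem onto the single scalar $\mathcal{L}^{-p}(z,z)$, while the role of $\alpha>0$ is precisely to make the flux sum $\sum_m|m-\Phi|^{-(1+\alpha)}$ finite and comparable to $d(\Phi)^{-1-\alpha}$.
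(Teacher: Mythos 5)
Your proof is correct. The upper bound is essentially the paper's argument verbatim: logarithmic coordinates on the cylinder, the Birman--Schwinger principle, the bound $n_+(1;K)\leq\Tr K^p$, the Araki--Lieb--Thirring inequality, and the explicit diagonal of $\mathcal{L}^{-p}$ giving the constant $\frac{1}{(2\pi)^2}B_p\sum_m|m-\Phi|^{-1-\alpha}$, which matches \eqref{eq:abexplicit} since $B_p=\sqrt{\pi}\,\Gamma(p-1/2)/\Gamma(p)$. Where you genuinely diverge from the paper is the lower bound in \eqref{eq:abequiv}. The paper (Remark \ref{rem:ABHardySobolev}) passes through the sharp constant in the magnetic Hardy--Sobolev inequality \eqref{eqn:ABHardySobolev}: a standard duality gives $S_{\Phi,q}\geq C_{\Phi,\alpha}^{-2/(\alpha+2)}$, and a single trial function $\phi((\ln r)/\ell)e^{in\theta}$ with $\ell=d(\Phi)^{-1}$ shows $S_{\Phi,q}\lesssim_q d(\Phi)^{1+2/q}$, which converts into $C_{\Phi,\alpha}\gtrsim_\alpha d(\Phi)^{-1-\alpha}$. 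You instead count eigenvalues directly for the radial square-well potential $\mathcal{W}=2d(\Phi)^2\1_{[0,T]}$ in the angular channel nearest $\Phi$, getting $\sim Td(\Phi)/\pi$ bound states against $\int V^p|x|^\alpha\,dx\sim_\alpha d(\Phi)^{2+\alpha}T$, and let $T\to\infty$. Both are valid; your route stays entirely within eigenvalue counting and is arguably more self-contained (one Dirichlet-bracketing step on $[0,T]$ makes the count $\geq\lfloor Td(\Phi)/\pi\rfloor$ rigorous), while the paper's route has the side benefit of producing the quantitative lower bound \eqref{eq:abhardysoblower} and matching upper bound \eqref{eq:abhardysobupper} for the Hardy--Sobolev constant itself, and it also foreshadows the sharper saturation analysis of Section \ref{sec:asymptotics}, where long-range potentials of the type \eqref{eqn:saturatingpotential} (close in spirit to your square wells) recover the full constant \eqref{eq:abexplicitweak} rather than just its order in $d(\Phi)$.
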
 

In fact, our proof yields the explicit upper bound
\begin{equation}
	\label{eq:abexplicit}
	C_{\Phi,\alpha} \leq \frac{\Gamma((1+\alpha)/2)}{4\pi^{3/2}\Gamma(1+\alpha/2)}\sum_{n\in\Z} |n-\Phi|^{-1-\alpha} \,.
\end{equation}
From this bound we immediately obtain the upper bound $C_{\Phi,\alpha} \lesssim_\alpha d(\Phi)^{-1-\alpha}$ in \eqref{eq:abequiv}. In the proof of Theorem \ref{thm:ABweightedCLR} we will show that this bound is sharp, thereby obtaining the precise divergence of the constant as the flux $\Phi$ approaches an integer value.

We complement Theorem \ref{thm:ABweightedCLR} with a variant of this bound with a weak norm.

\begin{corollary}\label{cor:ABweightedCLRweak}
    Let $\Phi\in\R\setminus\Z$ and $\alpha>0$. Then there is a constant $C_{\Phi,\alpha}^\prime<\infty$ such that
    \begin{equation}\label{eqn:ABweightedCLRweak}
        N(D_\Phi^2 -V)\leq C_{\Phi,\alpha}^\prime \sup_{t>0}t^{1+\alpha/2}\int_{|x|^2 V(x)_+>t}\frac{\dd x}{|x|^2}
    \end{equation}
for all $V\in L^1_{\rm loc}(\R^2)$ for which the right side is finite. Moreover, the constant can be chosen to satisfy
\begin{equation}
	\label{eq:abequivweak}
	C_{\Phi,\alpha}^\prime \sim_\alpha d(\Phi)^{-1-\alpha} \,. 
\end{equation}
\end{corollary}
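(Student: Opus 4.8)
The plan is to derive Corollary \ref{cor:ABweightedCLRweak} from Theorem \ref{thm:ABweightedCLR} by showing that the weak-norm quantity on the right of \eqref{eqn:ABweightedCLRweak} controls, up to a dimensional constant, a strong-norm integral of the form appearing in \eqref{eqn:ABweightedCLR} but with a slightly smaller exponent. The key observation is that the number of negative eigenvalues is monotone in the coupling in the sense that a layer-cake or interpolation argument lets us trade the strong $L^{1+\alpha/2}$ bound for its weak counterpart at the cost of adjusting $\alpha$. Concretely, I would fix $\alpha > 0$, choose some $\beta$ with $0 < \beta < \alpha$, and apply Theorem \ref{thm:ABweightedCLR} with exponent $\beta$ in place of $\alpha$, then bound the resulting strong integral by the weak norm with exponent $\alpha$.

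First I would rewrite the strong integral in \eqref{eqn:ABweightedCLR} using the layer-cake representation. Writing $f(x) = |x|^2 V(x)_+$ and integrating against the measure $\dd x / |x|^2$, one has
\begin{equation*}
	\int_{\R^2} V(x)_+^{1+\beta/2} |x|^\beta \dd x
	= \int_{\R^2} f(x)^{1+\beta/2} \frac{\dd x}{|x|^2}
	= \left(1+\tfrac{\beta}{2}\right)\int_0^\infty t^{\beta/2} \int_{f(x)>t} \frac{\dd x}{|x|^2}\, \dd t \,.
\end{equation*}
Denoting the weak-norm constant by $M := \sup_{t>0} t^{1+\alpha/2}\int_{f(x)>t} \dd x/|x|^2$, the inner integral is bounded by $M t^{-1-\alpha/2}$, so the $t$-integral becomes $\int_0^\infty t^{\beta/2} \min(\,\cdot\,, M t^{-1-\alpha/2})\,\dd t$. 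Splitting at the scale where the two bounds on $\int_{f>t}\dd x/|x|^2$ cross and using both the trivial bound and the weak bound on each piece, the integral converges precisely because $\beta < \alpha$, yielding a bound of the form $\int_{\R^2} V_+^{1+\beta/2}|x|^\beta\dd x \lesssim_{\alpha,\beta} M$. This is the standard mechanism by which a weak-$L^p$ bound dominates a strong-$L^q$ bound for $q$ on either side of $p$, adapted here to the measure $\dd x/|x|^2$.

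Combining this with Theorem \ref{thm:ABweightedCLR} applied at exponent $\beta$ gives
\begin{equation*}
	N(D_\Phi^2 - V) \leq C_{\Phi,\beta}\int_{\R^2} V_+^{1+\beta/2}|x|^\beta \dd x \lesssim_{\alpha,\beta} C_{\Phi,\beta}\, M \,,
\end{equation*}
so one may take $C_{\Phi,\alpha}^\prime \lesssim_{\alpha,\beta} C_{\Phi,\beta}$. The final task is to verify the flux dependence \eqref{eq:abequivweak}. For the upper bound I would choose $\beta$ as a fixed fraction of $\alpha$ (say $\beta = \alpha/2$) and invoke the estimate $C_{\Phi,\beta}\lesssim_\beta d(\Phi)^{-1-\beta}$ from \eqref{eq:abequiv}; to recover the correct power $d(\Phi)^{-1-\alpha}$ rather than $d(\Phi)^{-1-\beta}$, the cleaner route is to track the $d(\Phi)$-dependence through the explicit bound \eqref{eq:abexplicit} rather than through a fixed ratio, so that the weak argument preserves the homogeneity in the flux. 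The matching lower bound $C_{\Phi,\alpha}^\prime \gtrsim_\alpha d(\Phi)^{-1-\alpha}$ follows from the same family of saturating potentials used to prove sharpness in Theorem \ref{thm:ABweightedCLR}, since for those potentials the weak norm and the strong norm are comparable.

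The main obstacle I anticipate is \textbf{getting the sharp flux dependence}, not the finiteness of the constant. The layer-cake interpolation readily gives a finite $C_{\Phi,\alpha}^\prime$, but naively applying \eqref{eq:abequiv} at a shifted exponent $\beta$ produces the wrong power of $d(\Phi)$; one must instead arrange the interpolation so that the $\Phi$-dependence factors out correctly, which is most transparent by rescaling the potential $V \mapsto d(\Phi)^{2}V$ (or an analogous scaling matching the homogeneity of \eqref{eqn:ABweightedCLRweak}) to reduce to a flux-normalized estimate, and then carrying the flux weight back through. Verifying that the saturating potentials of Theorem \ref{thm:ABweightedCLR} indeed have comparable strong and weak norms — which is exactly the content recalled in \eqref{eqn:saturatingpotential} for the non-magnetic case — is routine but must be checked to close the lower bound.
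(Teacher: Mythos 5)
There is a genuine gap at the central step. You claim that for $0<\beta<\alpha$ the weak norm $M=\sup_{t>0}t^{1+\alpha/2}\int_{f>t}\dd x/|x|^2$ (with $f=|x|^2V_+$) controls the strong integral $\int_{\R^2}V_+^{1+\beta/2}|x|^\beta\dd x=\int f^{1+\beta/2}\,\dd x/|x|^2$, invoking the ``standard mechanism'' by which weak $L^p$ dominates strong $L^q$ for $q<p$. That mechanism requires the underlying measure to be finite, and here the measure $\dd x/|x|^2$ on $\R^2$ is infinite. Concretely, the layer-cake integral $\int_0^\infty t^{\beta/2}\int_{f>t}\dd x/|x|^2\,\dd t$ bounded via the weak norm gives $M\int_0^\infty t^{(\beta-\alpha)/2-1}\dd t$, which diverges at $t=0$; there is no ``trivial bound'' on $\int_{f>t}\dd x/|x|^2$ for small $t$ to split against, since the total measure is infinite. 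Indeed your claimed inequality fails exactly for the saturating potentials: for $V=W_p$ with $p=1+\alpha/2$ one has $M=2\pi<\infty$ while $\int V_+^{1+\beta/2}|x|^\beta\dd x=\infty$ for every $\beta<\alpha$. So the chain $N\leq C_{\Phi,\beta}\int V_+^{1+\beta/2}|x|^\beta\dd x\lesssim C_{\Phi,\beta}M$ cannot be closed as written.

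The missing idea is a truncation of the potential absorbed by the magnetic Hardy inequality. The paper writes $D_\Phi^2-V=\theta(D_\Phi^2-\theta^{-1}s|x|^{-2})+(1-\theta)(D_\Phi^2-(1-\theta)^{-1}|x|^{-2}(|x|^2V-s))$, uses \eqref{eqn:ABHardy} with $\theta^{-1}s\leq\Phi^2$ to discard the first bracket, and then applies Theorem \ref{thm:ABweightedCLR} at exponent $\beta$ to the \emph{truncated} potential $|x|^{-2}(|x|^2V-s)_+$. For that potential the layer-cake integral is $\int_0^\infty\sigma^{\beta/2}\int_{f>s+\sigma}\dd x/|x|^2\,\dd\sigma\leq[V]\int_0^\infty\sigma^{\beta/2}(\sigma+s)^{-1-\alpha/2}\dd\sigma$, which converges at $\sigma=0$ precisely because of the shift by $s>0$, yielding a factor $s^{(\beta-\alpha)/2}[V]$. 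This truncation is also what resolves the flux-dependence issue you flag: choosing $s=\theta\Phi^2$ produces the factor $\Phi^{\beta-\alpha}$ that, combined with $C_{\Phi,\beta}\lesssim_\beta\Phi^{-1-\beta}$ at a fixed $\beta=\alpha/2$, gives the correct power $d(\Phi)^{-1-\alpha}$; your proposed rescaling $V\mapsto d(\Phi)^2V$ does not achieve this, since both sides of \eqref{eqn:ABweightedCLRweak} are dilation-invariant and the flux cannot be scaled away. Your treatment of the lower bound in \eqref{eq:abequivweak} via \eqref{eq:strongweak} and the saturating potentials is fine and matches the paper.
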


Since
$$
\sup_{t>0}t^{1+\alpha/2}\int_{|x|^2 V(x)_+>t}\frac{\dd x}{|x|^2} \leq
\int_{\R^2}V(x)_+^{1+\alpha/2}|x|^{\alpha}\dd x \,,
$$
the bound \eqref{eqn:ABweightedCLR} follows from \eqref{eqn:ABweightedCLRweak} and for the sharp constants we find
\begin{equation}
	\label{eq:strongweak}
	C_{\Phi,\alpha} \leq C_{\Phi,\alpha}' \,.
\end{equation}
We will argue differently, however, and deduce Corollary \ref{cor:ABweightedCLRweak} from Theorem \ref{thm:ABweightedCLR}. To do this, we use an interpolation argument in the spirit of one of M.~Birman and M.~Solomyak~\cite{Birman1988InterpolationOperator}.

In further likeness to the situation for $-\Delta-V$ in dimensions $d\geq 3$, we derive examples of potentials with the same long-range behaviour \eqref{eqn:saturatingpotential} which saturate the weak inequality \eqref{eqn:ABweightedCLRweak} in the strong coupling limit. We refer to Section \ref{sec:asymptotics} for the details. There we will show, in particular,
\begin{equation}
	\label{eq:abexplicitweak}
	C_{\Phi,\alpha}' \geq \frac{\Gamma((1+\alpha)/2)}{4\pi^{3/2}\Gamma(1+\alpha/2)}\sum_{n\in\Z} |n-\Phi|^{-1-\alpha} \,,
\end{equation}
which should be compared with \eqref{eq:abexplicit}. Of course, these two bounds are consistent with \eqref{eq:strongweak}.

\medskip

Next, we describe our results for two-dimensional Schr\"odinger operators acting on antisymmetric functions. For functions $V$ on $\R^2$ that are symmetric in the sense that $V(x_1,x_2)=V(x_2,x_1)$ for almost every $x\in\R^2$ we can consider the operator $-\Delta - V$ in $L^2(\mathbb{R}^2)$ restricted to antisymmetric functions, that is, in the Hilbert space $$
L^2_{\mathbf{as}}(\mathbb{R}^2) = \{ u \in L^2(\mathbb{R}^2) : u(x_1, x_2) = - u(x_2, x_1) 
\ \text{for almost every}\ x\in\R^2 \} \,.
$$
We denote the resulting operator by $-\DeltaAnti - V$. Under the assumption that $V$ is radially non-increasing, a corresponding version of the CLR inequality for this operator was found in \cite{Laptev2022CalogeroDimensions}, namely
\begin{equation*}
    N(-\DeltaAnti-V)\lesssim \int_{\R^2}V(x)_{+}\dd x.
\end{equation*}
However, this inequality does not hold for general $V$, as noted in \cite[Remark 1]{Laptev2022CalogeroDimensions}.

Our second pair of main results are strong and weak weighted CLR bounds for $-\DeltaAnti - V$, analogous to the bounds we derived for the magnetic operator.

\begin{theorem}\label{thm:antiweightedCLR}
    Let $\alpha>0$, then there is a constant $C_{\alpha}<\infty$ such that 
    \begin{align}\label{eqn:AntiweightedCLR}
        N(-\DeltaAnti-V)\leq C_{\alpha}\int_{\R^2} V(x)_{+}^{1+\alpha/2} |x|^{\alpha}\dd x
    \end{align}
	for all symmetric $V\in L^1_{\rm loc}(\R^2)$ for which the right side is finite.
\end{theorem}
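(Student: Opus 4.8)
The plan is to mirror the proof of Theorem \ref{thm:ABweightedCLR}, exploiting that the restriction to antisymmetric functions plays the same role as the Aharonov--Bohm flux: it removes the zero angular mode responsible for the two-dimensional instability and replaces it by a genuine centrifugal barrier. First I would reduce $N(-\DeltaAnti - V)$ to a Dirichlet problem on a half-plane. Writing $\sigma(x_1,x_2) = (x_2,x_1)$, an antisymmetric $u$ is determined by its restriction to $H = \{x_1 > x_2\}$ and vanishes on the diagonal $\{x_1 = x_2\}$; since $V$ is symmetric, i.e.\ $V\circ\sigma = V$, multiplication by $V$ preserves $L^2_{\mathbf{as}}(\R^2)$, and the form of $-\DeltaAnti - V$ is unitarily equivalent to that of the Dirichlet realization of $-\Delta - V$ on $H$. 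Placing the vertex of the sector $H$ (opening angle $\pi$) at the origin and separating variables, the angular part is the Dirichlet Laplacian on an interval of length $\pi$, with eigenfunctions $\sin(k\phi)$ and eigenvalues $k^2$, $k\geq 1$. Thus the free operator decomposes as $\bigoplus_{k\geq 1} h_k$ with radial parts $h_k = -\tfrac{1}{r}\partial_r(r\partial_r) + k^2 r^{-2}$, exactly as in the magnetic case but with the fluxes $|n-\Phi|$, $n\in\Z$, replaced by the integers $k\geq 1$.

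The key quantitative ingredient is then the same one-dimensional weighted bound that underlies Theorem \ref{thm:ABweightedCLR}: after the substitution turning the measure $r\dd r$ into $\dd r$, each $h_k$ becomes $-\tfrac{\dd^2}{\dd r^2} + (k^2 - \tfrac14)r^{-2}$, and for such an operator one has an estimate of the form $N(-\tfrac{\dd^2}{\dd r^2} + (k^2-\tfrac14)r^{-2} - W) \lesssim_\alpha k^{-1-\alpha}\int_0^\infty W(r)_+^{1+\alpha/2} r^{1+\alpha}\dd r$. Since the zero-energy resolvent of $-\DeltaAnti$ has the explicit, pointwise positive kernel $\tfrac{1}{2\pi}\ln(|x - \sigma y|/|x-y|)$, which is diagonalized by the modes $\sin(k\phi)$ with radial Green's functions $\tfrac{1}{2k}(r_{<}/r_{>})^{k}$, I would run the Birman--Schwinger argument of Theorem \ref{thm:ABweightedCLR} verbatim, with $|n-\Phi|$ replaced by $k$ throughout. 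Summing the per-mode estimates produces the series $\sum_{k\geq 1} k^{-1-\alpha} = \zeta(1+\alpha)$, which is finite precisely because $\alpha > 0$; this furnishes the finite constant $C_\alpha$ and explains why, in contrast to $C_{\Phi,\alpha}$, no divergence occurs, since the smallest barrier $k=1$ is bounded away from zero.

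The main obstacle, as in the magnetic case, is that a general symmetric $V$ depends on the angle and therefore couples the modes $\sin(k\phi)$, so one cannot simply add the scalar one-dimensional operators $h_k - W_k$. This is where the Birman--Schwinger formulation is essential: writing $N \leq n(1; V_+^{1/2}(-\DeltaAnti)^{-1}V_+^{1/2})$ and splitting the positive kernel above into its angular components $K_k \geq 0$, the coupling is controlled by the same operator-theoretic estimate (Weyl's inequality $n(s+t; A+B)\leq n(s;A)+n(t;B)$ together with the Schatten bound $n(s;K_k)\leq s^{-1-\alpha/2}\Tr K_k^{1+\alpha/2}$, optimized over the splitting) that was used for the magnetic operator, the angular integration reconstituting the full weight $\int_{\R^2} V_+^{1+\alpha/2}|x|^\alpha\dd x$. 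Apart from this bookkeeping, the only point requiring care is the reduction in the first step, namely checking that the form closure on antisymmetric $C_0^\infty(\R^2\setminus\{x_1=x_2\})$ functions matches the Dirichlet realization on $H$; this is routine and parallels the treatment of the origin singularity in the magnetic problem.
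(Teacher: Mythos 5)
Your first two paragraphs set up exactly the paper's framework: in polar/logarithmic coordinates the antisymmetric subspace is spanned by the harmonics $\sin(k(\theta-\pi/4))$, $k\ge 1$, the angular eigenvalues $k^2$ replace $(n-\Phi)^2$, and the one-dimensional per-mode bound you quote is correct (it is what the theorem reduces to for \emph{radial} $V$). The gap is in your third paragraph, i.e.\ precisely in the mechanism you propose for handling the mode coupling. You split the Birman--Schwinger operator $B=V_+^{1/2}(-\DeltaAnti)^{-1}V_+^{1/2}=\sum_{k\ge1}K_k$ and estimate $n(1;B)\le\sum_k n(s_k;K_k)\le\sum_k s_k^{-p}\Tr K_k^{p}$ with $p=1+\alpha/2$, optimizing over $\sum_k s_k=1$. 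Since $\Tr K_k^{p}\lesssim_\alpha k^{-1-\alpha}\int_{\R^2}V_+^{p}|x|^{2p-2}\dd x$, the optimal splitting $s_k\propto(\Tr K_k^p)^{1/(1+p)}$ gives
\begin{equation*}
\sum_k s_k^{-p}\Tr K_k^{p}=\Bigl(\sum_{k\ge1}\bigl(\Tr K_k^{p}\bigr)^{1/(1+p)}\Bigr)^{1+p}\lesssim_\alpha\Bigl(\sum_{k\ge1}k^{-(1+\alpha)/(2+\alpha/2)}\Bigr)^{2+\alpha/2}\int_{\R^2}V_+^{1+\alpha/2}|x|^{\alpha}\dd x ,
\end{equation*}
and the series converges only when $(1+\alpha)/(2+\alpha/2)>1$, i.e.\ when $\alpha>2$. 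For $0<\alpha\le 2$ your chain of inequalities does not close. (It is also not the argument used for the magnetic operator: no splitting of the counting function occurs there, so the claim that you are reusing that step is a misremembering.)

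The paper's proof avoids the splitting altogether. By the Birman--Schwinger principle $N(-\DeltaAnti-V)=n_+(1;B)\le\Tr B^{p}$, and then the (Araki--)Lieb--Thirring trace inequality $\Tr\bigl(V^{1/2}GV^{1/2}\bigr)^{p}\le\Tr\bigl(V^{p/2}G^{p}V^{p/2}\bigr)$ is applied to the \emph{whole} resolvent $G=(-\DeltaAnti)^{-1}$ at once. The angular decomposition enters only afterwards, at the level of $G^{p}$, where it is lossless because the mode projections $P_k$ commute with $G$: one has $G^{p}=\sum_k P_kG^{p}P_k$, whose diagonal kernel is $\sum_k\phi_k(\theta)^2\,(2\pi)^{-1}\int_\R(\xi^2+k^2)^{-p}\dd\xi\le\frac{\Gamma(p-1/2)}{2\pi^{3/2}\Gamma(p)}\,\zeta(2p-1)$, yielding the constant \eqref{eq:antiexplicit} for every $\alpha>0$. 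If you replace your Weyl-inequality step by this trace argument, the rest of your proof goes through and coincides with the paper's.
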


In fact, our proof yields the explicit upper bound
\begin{equation}
	\label{eq:antiexplicit}
	C_{\alpha} \leq \frac{\Gamma((1+\alpha)/2)}{2\pi^{3/2}\Gamma(1+\alpha/2)}\,\zeta(1+\alpha) \,,
\end{equation}
where $\zeta$ is the Riemann zeta function.

\begin{corollary}\label{cor:AntiweightedCLRweak}
    Let $\alpha>0$, then there is a constant $C_{\alpha}^\prime<\infty$ such that
    \begin{equation}\label{eqn:AntiweightedCLRweak}
        N(-\DeltaAnti-V)\leq C_{\alpha}^\prime \sup_{t>0}t^{1+\alpha/2}\int_{|x|^2 V(x)_+>t}\frac{\dd x}{|x|^2}
    \end{equation}
	for all symmetric $V\in L^1_{\rm loc}(\R^2)$ for which the right side is finite.
\end{corollary}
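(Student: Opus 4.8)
The plan is to deduce the weak-type bound \eqref{eqn:AntiweightedCLRweak} from the strong bound of Theorem~\ref{thm:antiweightedCLR} by the same interpolation argument, in the spirit of Birman--Solomyak, that yields Corollary~\ref{cor:ABweightedCLRweak} from Theorem~\ref{thm:ABweightedCLR}. Since both sides depend only on $V_+$ and since $-\DeltaAnti-V\ge-\DeltaAnti-V_+$ as quadratic forms, I may assume throughout that $V\ge 0$. It is convenient to set $f(x):=|x|^2V(x)$ and $\dd\mu:=|x|^{-2}\dd x$, so that the supremum on the right-hand side of \eqref{eqn:AntiweightedCLRweak} is $W:=\sup_{t>0}t^{p}\,\mu(\{f>t\})$ with $p:=1+\alpha/2$, while for any exponent $\beta>0$ the weighted integral of Theorem~\ref{thm:antiweightedCLR} reads $\int f^{1+\beta/2}\dd\mu$. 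Because $V$ is symmetric, so is $f$, and hence each superlevel set $\{f>s\}$ is invariant under $(x_1,x_2)\mapsto(x_2,x_1)$; multiplication by $\1_{\{f>s\}}$ therefore maps $L^2_{\mathbf{as}}(\R^2)$ to itself, and all operators below act within the antisymmetric subspace.

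I would then split the potential at a threshold $s>0$, writing $V=V\1_{\{f>s\}}+V\1_{\{f\le s\}}=:V_1+V_2$, and invoke the standard variational splitting
\[
N(-\DeltaAnti-V)\le N(-\DeltaAnti-2V_1)+N(-\DeltaAnti-2V_2),
\]
which follows by covering any subspace where $-\DeltaAnti-V<0$ by the two cones on which $\tfrac12(-\DeltaAnti)-V_i<0$ and using form scaling. To each term I apply Theorem~\ref{thm:antiweightedCLR}, but with two \emph{different} exponents $\beta_1$ and $\beta_2$ obeying $0<\beta_1<\alpha<\beta_2$; this is the one place where it is essential that the strong inequality holds for a whole range of weights. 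With $p_i:=1+\beta_i/2$ this gives
\[
N(-\DeltaAnti-V)\le C_{\beta_1}2^{p_1}\!\int_{\{f>s\}}\!f^{p_1}\dd\mu+C_{\beta_2}2^{p_2}\!\int_{\{f\le s\}}\!f^{p_2}\dd\mu .
\]

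The remaining step is a routine layer-cake estimate using only $\mu(\{f>t\})\le W t^{-p}$. For the high part the condition $p_1<p$ makes the tail converge and yields $\int_{\{f>s\}}f^{p_1}\dd\mu\le\frac{p}{p-p_1}\,W\,s^{p_1-p}$, whereas for the low part $p_2>p$ makes the integral converge near the origin and yields $\int_{\{f\le s\}}f^{p_2}\dd\mu\le\frac{p_2}{p_2-p}\,W\,s^{p_2-p}$. Both bounds are linear in $W$, the first decreasing and the second increasing in $s$, so minimising over $s>0$ balances $s^{p_1-p}$ against $s^{p_2-p}$ and produces $N(-\DeltaAnti-V)\le C_\alpha' W$ with a finite constant depending only on $\alpha$, $\beta_1$, $\beta_2$, $C_{\beta_1}$ and $C_{\beta_2}$; taking e.g.\ $\beta_1=\alpha/2$ and $\beta_2=2\alpha$ keeps everything finite for every $\alpha>0$. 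The only genuine subtlety is the one already noted, that the interpolation consumes the strong bound at exponents on both sides of $\alpha$, while the splitting and the optimisation over $s$ are entirely standard.
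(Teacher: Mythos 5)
Your argument is correct, but it follows a genuinely different route from the paper's. The paper (in the proof of Corollary \ref{cor:ABweightedCLRweak}, which it declares to carry over verbatim to the antisymmetric case) writes $-\DeltaAnti-V$ as a convex combination $\theta(-\DeltaAnti-\theta^{-1}s|x|^{-2})+(1-\theta)(-\DeltaAnti-(1-\theta)^{-1}|x|^{-2}(|x|^2V-s))$, discards the first summand as nonnegative by the antisymmetric Hardy inequality \eqref{eqn:AntiHardy} once $\theta^{-1}s\leq 1$, and then applies the strong bound of Theorem \ref{thm:antiweightedCLR} at a \emph{single} exponent $\beta\in(0,\alpha)$ to the truncated potential $|x|^{-2}(|x|^2V-s)_+$, whose weighted $L^{1+\beta/2}$ norm is controlled by the weak quantity $[V]$ via the same layer-cake computation you use for your high part. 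You instead cut $V$ along the level set $\{|x|^2V>s\}$, use the elementary subadditivity $N(A_1+A_2)\leq N(A_1)+N(A_2)$, and invoke the strong inequality at two exponents $\beta_1<\alpha<\beta_2$; your layer-cake estimates for both pieces are correct, the minimisation over $s$ is sound, and you rightly check that the cut preserves symmetry so that all operators act in $L^2_{\mathbf{as}}$. The trade-off: your proof never touches the Hardy inequality and so would work for any operator admitting the strong bound on an interval of exponents, which is a mild gain in generality; the paper's version uses the theorem only once (hence gives cleaner constants, which matters in the magnetic case where the $\Phi$-dependence is tracked) but leans on the extra input \eqref{eqn:AntiHardy} to dispose of the low part. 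Both are recognisably Birman--Solomyak-type interpolation arguments, and yours is essentially the classical real-interpolation (Marcinkiewicz-style) variant.
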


Again, for long-range potentials of the form \eqref{eqn:saturatingpotential} the bound in the corollary can be saturated in the strong coupling limit and one obtains the lower bound
$$
C_{\alpha}^\prime \geq \frac{\Gamma((1+\alpha)/2)}{2\pi^{3/2}\Gamma(1+\alpha/2)}\,\zeta(1+\alpha) \,.
$$

Our plan for the paper is as follows: In Section \ref{sec:proofweightedCLR} we present the proof of Theorems~\ref{thm:ABweightedCLR} and~\ref{thm:antiweightedCLR}. In Section \ref{sec:interpolation} we derive the weak forms of the inequalities above. Finally, in Section \ref{sec:asymptotics} we will show that these bounds are saturated in the strong coupling limit by potentials with long range behaviour \eqref{eqn:saturatingpotential}. 

\section{Proof of theorems \ref{thm:ABweightedCLR} and \ref{thm:antiweightedCLR}}\label{sec:proofweightedCLR}
%%In this section we give a proof of the weighted CLR inequalities for the two dimensional operators discussed above. In both cases the main ingredients in the proofs will be a change to logarithmic-polar coordinates, the Birman--Schwinger operator and the matrix Lieb--Thirring inequality. 

\subsection{The Aharonov--Bohm operator}

We begin by showing that the operators $D_\Phi^2-V$ are well-defined in quadratic form sense when $\Phi\in\R\setminus\Z$ and $V$ is such that the right side in either Theorem \ref{thm:ABweightedCLR} or Corollary \ref{cor:ABweightedCLRweak} is finite. The main ingredient in this argument is the magnetic Hardy--Sobolev inequality
\begin{equation}\label{eqn:ABHardySobolev}
	\int_{\R^2}| D_\Phi u|^2\dd x \geq S_{\Phi,q} \left(\int_{\R^2}\frac{|u|^q}{|x|^2}\dd x\right)^{2/q}
	\qquad\text{for all}\ u\in C^\infty_0(\R^2\setminus\{0\}) \,,
\end{equation}
with $S_{\Phi,q}>0$ provided that $q\in[2,\infty)$. A proof of this inequality can be found in \cite[Section 3.1, Step 1]{Bonheure2020SymmetryFields} based on the diamagnetic inequality and a special case of the Caffarelli--Kohn--Nirenberg inequality for scalar functions. Alternatively, one can deduce this inequality using the method of \cite{Egorov1989OnOperator}. In the special case $q=2$ inequality \eqref{eqn:ABHardySobolev} with sharp constant is due to \cite{Laptev1999HardyForms} and reads
\begin{equation}\label{eqn:ABHardy}
	\int_{\R^2}| D_\Phi u|^2\dd x \geq d(\Phi)^{2} \int_{\R^2} \frac{|u|^2}{|x|^2}\dd x  \,.
\end{equation}
Some results about the sharp constant in \eqref{eqn:ABHardySobolev} for $q>2$ can be found in \cite{Bonheure2020SymmetryFields}.

Let us show how to use \eqref{eqn:ABHardySobolev} to define the operator $D_\Phi^2-V$. We combine \eqref{eqn:ABHardySobolev} with H\"older's inequality to obtain for $u\in C^\infty_c(\R^2\setminus\{0\})$
\begin{align}\label{eq:hardysobholder}
	\int_{\R^2} V|u|^2\,\dd x & \leq \left( \int_{\R^2} V_+^{1+\alpha/2} |x|^\alpha \dd x \right)^{1/(1+\alpha/2)} \left( \int_{\R^2} \frac{|u|^q}{|x|^2} \dd x \right)^{2/q} \notag \\
	& \leq S_{\Phi,q}^{-1} \left( \int_{\R^2} V_+^{1+\alpha/2} |x|^\alpha \dd x \right)^{1/(1+\alpha/2)} \int_{\R^2}| D_\Phi u|^2\dd x \,,
\end{align}
where $q$ and $\alpha$ are related by $1/(1+\alpha/2) + 2/q = 1$. The assumption $q<\infty$ is equivalent to $\alpha>0$.

Now given $V\in L^1_{\rm loc}(\R^2)$ such that the integral in Theorem \ref{thm:ABweightedCLR} is finite and given $\epsilon>0$, we decompose $V=V_1+V_2$ with $V_2\in L^\infty(\R^2)$ and $V_1\geq 0$ satisfying
$$
\int_{\R^2} V_1^{1+\alpha/2} |x|^\alpha \dd x \leq \epsilon.
$$
Applying \eqref{eq:hardysobholder} with $V_1$ we find that $V$ is form-bounded with respect to $D_\Phi^2$ relative form bound zero. This allows us to define $D_\Phi^2-V$ as a selfadjoint, lower semibounded operator in $L^2(\R^2)$ with form core $C_c^\infty(\R^2\setminus\{0\})$.

Meanwhile, let $V\in L^1_{\rm loc}(\R^2)$ be given such that the integral in Corollary \ref{cor:ABweightedCLRweak} is finite and let $\epsilon>0$. We choose $\tilde q\in(q,\infty)$ and define $\tilde\alpha>0$ by $1/(1+\tilde\alpha/2) + 2/\tilde q=1$. We can decompose $V=V_1+V_2$ with $\| |x|^2 (V_2)_+ \|_{L^\infty(\R^2)}\leq \epsilon$ and $V_1\geq 0$ satisfying
$$
\int_{\R^2} V_1^{1+\tilde\alpha/2} |x|^{\tilde\alpha} \dd x <\infty \,.
$$
(Indeed, we can simply take $V_1 = |x|^{-2} (|x|^2 V - \epsilon)_+ -V_-$.) Proceeding as before to control the $V_1$ piece and using \eqref{eqn:ABHardy} to control the $V_2$ piece, we find again that $V$ is form-bounded with respect to $D_\Phi^2$ with relative bound zero and, consequently, that $D_\Phi^2-V$ is well-defined.

Next, we recall that the operators $D_\Phi^2-V$ and $D_{\Phi-k}^2-V$ are unitarily equivalent for $k\in\Z$ and that the operators $D_\Phi^2-V$ and $D_{-\Phi}^2-V$ are antiunitarily equivalent; see, e.g., \cite[Subsection 2.1]{Bonheure2020SymmetryFields}. Thus, in what follows we can restrict ourselves to the case $\Phi\in(0,1/2]$.

We are now ready to present the proof of the weighted CLR bound for $D_\Phi^2-V$.

\begin{proof}[Proof of Theorem \ref{thm:ABweightedCLR}]
    Fix $\alpha>0$ and let $V_+|x|^2\in L^{1+\alpha/2}(\R^2;\dd x/|x|^2)$. As explained above, we may assume $\Phi\in(0,1/2]$. Moreover, by the variational principle, we may assume $V\geq 0$. According to \eqref{eq:hardysobholder} the Birman--Schwinger operator $V^{1/2}(D_\Phi^2)^{-1} V^{1/2}$ is well-defined and bounded. Changing to polar coordinates and logarithmic variables, this operator becomes $\widetilde V^{1/2}_+(-\partial^2_t+(\ii \partial_\theta-\Phi)^2)^{-1} \widetilde V^{1/2}_+$ in $L^2(\R\times\Sph^1)$, where
    $$
    \widetilde V(t,\theta)=e^{2t}V(e^t \cos\theta,e^t\sin\theta) \,.
    $$
    
    Applying the Birman--Schwinger principle (see, e.g., \cite[Subsection 4.3.3]{Frank2022SchrodingerInequalities}) and the Lieb--Thirring inequality (see \cite{Lieb1976InequalitiesInequalities} and also \cite[Theorem 4.59]{Frank2022SchrodingerInequalities}) we obtain that for $p=1+\alpha/2>1$ 
    \begin{equation}
        \begin{split}\label{eqn:thm1proofbirm}
        N(D_\Phi^2-V)&=n_+(1,\widetilde V^{1/2}(-\partial^2_t+(\ii \partial_\theta-\Phi)^2)^{-1} \widetilde V_+^{1/2})\\
        &\leq \Tr(\widetilde V^{1/2}(-\partial^2_t+(\ii \partial_\theta-\Phi)^2)^{-1} \widetilde V^{1/2})^p\\
        &\leq \Tr(\widetilde V^{p/2}(-\partial^2_t+(\ii \partial_\theta-\Phi)^2)^{-p} \widetilde V^{p/2}).
        \end{split}
    \end{equation}
    To compute the trace we need to find the integral kernel of the operator $(-\partial^2_t+(\ii \partial_\theta-\Phi)^2)^{-p}$, which we denote by $G_{\Phi,p}(t,\theta;\tau,\vartheta)$. We note that $(-\partial_t^2+(i\partial_\theta-\Phi)^2)$ in $L^2(\R\times\Sph)$ is unitarily equivalent, via a continuous and a discrete Fourier transform, to multiplication by $\xi^2+(n-\Phi)^2$ in $L^2(\R) \times \ell_2(\Z)$. Thus,
    $$
    G_{\Phi,p}(t,\theta;\tau,\vartheta) = \frac{1}{(2\pi)^2} \sum_{n\in\Z}\int_{\R}\frac{e^{\ii n(\theta -\vartheta)}e^{\ii\xi(t-\tau)}}{(\xi^2+(n-\Phi)^2)^p} \dd\xi \,.
    $$
    Given that $\Phi\in (0,1/2]$ and $p>1$ the above sum converges. Moreover, $g_{\Phi,p}:=G_{\Phi,p}(t,\theta;t,\theta)$ is independent of $t$ and $\theta$ and we compute that
    \begin{equation*}
        g_{\Phi,p}=\frac{\Gamma(p-1/2)}{4\pi^{3/2}\Gamma(p)}\sum_{n\in\Z} |n-\Phi|^{1-2p} = \frac{\Gamma((1+\alpha)/2)}{4\pi^{3/2}\Gamma(1+\alpha/2)}\sum_{n\in\Z} |n-\Phi|^{-1-\alpha} \,.
    \end{equation*}
    Returning to the estimate in \eqref{eqn:thm1proofbirm} we conclude that 
    \begin{align*}
        \Tr(\widetilde V_+^{p/2}(-\partial^2_t+(\ii \partial_\theta-\Phi)^2)^{-p} \widetilde V_+^{p/2})&= g_{\Phi,p} \int_{\R}\int_{-\pi}^\pi \widetilde V(t,\theta)^p_+\dd \theta\dd t\\
        &=g_{\Phi,p} \int_{\R^2} V(x)_+^{p}|x|^{2p-2}\dd x \,,
    \end{align*}
    which completes the proof of \eqref{eqn:ABweightedCLR} with the constant given in \eqref{eq:abexplicit}. This easily implies the upper bound in \eqref{eq:abequiv}. The lower bound is a consequence of the following remark. 
\end{proof}

\begin{remark} \label{rem:ABHardySobolev}
	A standard argument shows that the sharp constants in the CLR-type inequality \eqref{eqn:ABweightedCLR} and in the magnetic Hardy-Sobolev inequality \eqref{eqn:ABHardySobolev} satisfy
	\begin{equation}
		\label{eq:clrsob}
		S_{\Phi,q} \geq C_{\Phi,\alpha}^{-2/(\alpha+2)}
		\qquad\text{with}\ \frac{2}{\alpha+2} + \frac 2q = 1 \,.
	\end{equation}
	In particular, \eqref{eq:abexplicit} implies that
	$$
	S_{\Phi,q} \geq \left(\frac{\Gamma(1/2+2/(q-2))}{4\pi^{3/2}\Gamma(1+2/(q-2))}\sum_{n\in\Z}|n-\Phi|^{-1-4/(q-2)}\right)^{-(q-2)/q}
	$$
	and the upper bound in \eqref{eq:abequiv} implies that
	\begin{equation}
		\label{eq:abhardysoblower}
		S_{\Phi,q} \gtrsim_q d(\Phi)^{1+2/q} \,.
	\end{equation}
	Let us show that this bound is optimal, that is,
	\begin{equation}
		\label{eq:abhardysobupper}
		S_{\Phi,q} \lesssim_q d(\Phi)^{1+2/q} \,.
	\end{equation}
	In view of \eqref{eq:clrsob} this will prove the lower bound in \eqref{eq:abequiv} and thereby complete the proof of Theorem \ref{thm:ABweightedCLR}.
	
	We fix $\phi\in C^\infty_c(\R)$ and define
	$$
	u(r\cos\theta,r\sin\theta) = \phi((\ln r)/\ell)\, e^{in\theta} \,,
	$$
	where $n\in\Z$ is such that $d(\Phi) = |n-\Phi|$. Then, by \eqref{eqn:ABHardySobolev} after changing to logarithmic coordinates,
	\begin{align*}
		\ell^{-1} \int_\R |\phi'(t)|^2\dd t + d(\Phi)^2 \ell \int_\R |\phi(t)|^2 \dd t 
		\geq S_{\Phi,q} \left( \ell \int_\R |\phi(t)|^q\dd t \right)^{2/q}.
	\end{align*}
	Choosing $\ell = d(\Phi)^{-1}$ we obtain \eqref{eq:abhardysobupper}.
\end{remark}

\subsection{The antisymmetric operator}

The same construction and arguments carry over to the antisymmetric operator. In this case, the Hardy--Sobolev inequalities \eqref{eqn:ABHardySobolev} are replaced by the inequalities
\begin{equation}\label{eqn:AntiHardySobolev}
	\int_{\R^2}|\nabla u|^2\dd x \geq S_q \left(\int_{\R^2}\frac{|u|^q}{|x|^2}\dd x\right)^\frac{2}{q}
	\qquad\text{for all antisymmetric}\ u\in C^\infty_c(\R^2\setminus\{0\}) \,
\end{equation}
with $S_q>0$ provided that $q\in[2,\infty)$. A proof of this inequality can be found in \cite{Hoffmann-Ostenhof2021HardyFunctions}. In the special case $q=2$ we have
\begin{equation}\label{eqn:AntiHardy}
	\int_{\R^2}|\nabla u|^2\dd x \geq \int_{\R^2}\frac{|u|^2}{|x|^2}\dd x
	\qquad\text{for all antisymmetric}\ u\in C^\infty_c(\R^2\setminus\{0\})
\end{equation}
with the sharp constant equal to one.

For symmetric $V$ such that either the right side in Theorem \ref{thm:antiweightedCLR} or in Corollary \ref{cor:AntiweightedCLRweak} is finite we can define the operators $-\DeltaAnti-V$ in $L^2_{{\bf as}}(\R^2)$ similarly as in the Aharonov--Bohm case.

\begin{proof}[Proof of Theorem \ref{thm:antiweightedCLR}]
We fix $\alpha>0$ and take $0\leq V\in L^{1+\alpha/2}(\R^2;\dd x/|x|^2)$ as before. The Birman--Schwinger operator $V^{1/2}(-\DeltaAnti)^{-1}V^{1/2}$ in $L^2_{{\bf as}}(\R^2)$ is unitarily equivalent to the operator $\widetilde V^{1/2}(-\partial^2_t-\partial_\theta^2)^{-1} \widetilde V^{1/2}$ acting in the subspace of function $u\in L^2(\R\times\Sph^1)$ satisfying $u(t,\theta)=-u(t,\pi/2-\theta)$. Here $\widetilde V$ is defined as in the proof of Theorem \ref{thm:antiweightedCLR}. Applying the Birman--Schwinger principle and the Lieb--Thirring inequality as before, we are reduced to finding the integral kernel $G_p(t,\theta;\tau,\vartheta)$ corresponding to $(-\partial_t^2-\partial_{\theta}^2)^{-p}$ acting in this subspace. To find it, we argue as previously, using a Fourier decomposition in terms of the antisymmetric angular harmonics $\phi_n(\theta)=\pi^{-1/2}\sin(n(\theta-\pi/4))$, $n\in\N$. It follows that 
\begin{align*}
    G_p(t,\theta;t,\theta)&=\frac{1}{2\pi}\sum_{n=1}^\infty\phi_n(\theta)^2 \int_{\R}\frac{1}{(\xi^2+n^2)^p}\dd \xi\\
    &\leq \frac{\Gamma(p-1/2)}{2\pi^{3/2}\Gamma(p)}\left(\sum_{n=1}^\infty n^{1-2p}\right) = \frac{\Gamma(p-1/2)}{2\pi^{3/2}\Gamma(p)}\, \zeta(2p-1) \,,
\end{align*}
where $\zeta$ denotes the Riemann zeta function. This proves Theorem \ref{thm:antiweightedCLR}.
\end{proof}

%%%%%%%%%%%%%%%%%%%%%%%%%%%%%%

\section{Interpolation and proof of corollaries \ref{cor:ABweightedCLRweak} and \ref{cor:AntiweightedCLRweak}} \label{sec:interpolation}

In this section we derive Corollaries \ref{cor:ABweightedCLRweak} and \ref{cor:AntiweightedCLRweak} from Theorems \ref{thm:ABweightedCLR} and \ref{thm:antiweightedCLR}, respectively. We use a variant of an interpolation argument by Birman and Solomyak \cite{Birman1988InterpolationOperator}, but we avoid any explicit mention of interpolation theory or ideals of compact operators.

\begin{proof}[Proof of Corollary \ref{cor:ABweightedCLRweak}]
	We fix $\alpha>0$ and recall that we may assume that $0<\Phi\leq 1/2$ and that $V\geq 0$. With two parameters $s>0$ and $0<\theta<1$ to be determined we write
	$$
	D_\Phi^2 - V = \theta( D_\Phi^2 - \theta^{-1} s |x|^{-2}) + (1-\theta) (D_\Phi^2 - (1-\theta)^{-1} |x|^{-2}(|x|^2 V-s)) \,.
	$$
	Assuming that $\theta^{-1} s\leq\Phi^2$ we can use the magnetic Hardy inequality \eqref{eqn:ABHardy} to bound
	$$
	D_\Phi^2 - V \geq (1-\theta) (D_\Phi^2 - (1-\theta)^{-1} |x|^{-2}(|x|^2 V-s)_+) \,.
	$$
	Thus, by the variational principle
	$$
	N(D_\Phi^2-V) \leq N(D_\Phi^2 - (1-\theta)^{-1} |x|^{-2}(|x|^2 V-s)_+) \,.
	$$
	For an arbitrary $0<\beta<\alpha$ we can apply Theorem \ref{thm:ABweightedCLR} and obtain
	$$
	N(D_\Phi^2-V) \leq C_{\Phi,\beta} (1-\theta)^{-1-\beta/2} \int_{\R^2} (|x|^2 V(x)-s)_+^{1+\beta/2}\,\frac{\dd x}{|x|^2} \,.
	$$
	Abbreviating $[V]:=\sup_{t>0} t^{1+\alpha/2} \int_{|x|^2V(x)>t}\frac{\dd x}{|x|^2}$ and using the layer cake representation we find
	\begin{align*}
		\int_{\R^2} (|x|^2 V(x)-s)_+^{1+\beta/2}\,\frac{\dd x}{|x|^2}
		& = (1+\beta/2) \int_0^\infty \int_{|x|^2 V(x)-s>\sigma}\, \frac{\dd x}{|x|^2} \,\sigma^{\beta/2}\,\dd\sigma \\
		& \leq (1+\beta/2)\, [V]\, \int_0^\infty (\sigma+s)^{-1-\alpha/2} \,\sigma^{\beta/2}\,\dd\sigma \\
		& = \frac{\Gamma(2+\beta/2)\,\Gamma((\alpha-\beta)/2)}{\Gamma(1+\alpha/2)}
		\, s^{(\beta-\alpha)/2} \,[V] \,. 
	\end{align*}
	In the last computation we used a beta function identity. To minimize this bound, we choose $s=\theta\Phi^2$ and obtain
	$$
	N(D_\Phi^2-V) \leq \frac{\Phi^{\beta-\alpha}\, C_{\Phi,\beta}}{\sup_{0<\theta<1}(1-\theta)^{1+\beta/2} \theta^{(\alpha-\beta)/2}}  \frac{\Gamma(2+\beta/2)\,\Gamma((\alpha-\beta)/2)}{\Gamma(1+\alpha/2)}
	\, [V] \,.
	$$
	This bound can still be optimized with respect to $\beta\in(0,\alpha)$. This proves \eqref{eqn:ABweightedCLRweak}. Taking a fixed $\beta$ (say $\beta=\alpha/2$) and recalling that $C_{\Phi,\beta}\lesssim_\beta \Phi^{-1-\beta}$ by \eqref{eq:abequiv}, we deduce the upper bound in \eqref{eq:abequivweak}. The lower bound follows from \eqref{eq:strongweak} together with the lower bound in~\eqref{eq:abequiv}.
\end{proof}

The proof of Corollary \ref{cor:AntiweightedCLRweak} is similar to that of Corollary \ref{cor:ABweightedCLRweak} and is omitted.

%%%%%%%%%%%%%%%%%%%%%%%%%
 
\section{Long-range potentials and behaviour of constants}\label{sec:asymptotics}

In this section we construct for arbitrary $\alpha>0$ a $V$, which in the strong coupling limit saturates the weak bounds \eqref{eqn:ABweightedCLRweak} and \eqref{eqn:AntiweightedCLRweak}. We follow arguments which were carried out for dimensions $d\geq 3$ in \cite{Birman1990NegativeDiscussion,Birman1992SchrodingerProblem,Laptev1993AsymptoticsConstant}.

\begin{theorem} \label{thm:ABWpasymp}
	Let $\Phi\in\R\setminus\Z$, let $p>0$ and assume that $V\in L^\infty(\R^2)$ satisfies
	$$
	V(x) =  |x|^{-2} (\ln|x|)^{-1/p} \left( 1+ o(1) \right)
	\qquad\text{as}\ |x|\to\infty \,.
	$$
	Then for $p>1$
    \begin{equation*}
        \lim_{\lambda\rightarrow\infty}\lambda^{-p}N(D_\Phi^2-\lambda V ) = \frac{\Gamma(p-1/2)}{2\sqrt{\pi}\Gamma(p)}\sum_{n\in\Z}\frac{1}{|n-\Phi|^{2p-1}} \,,
    \end{equation*}
    for $p=1$
    \begin{equation*}
        \lim_{\lambda\rightarrow\infty}(\lambda\ln\lambda)^{-1}N(D_\Phi^2-\lambda V)=\frac{1}{2} \,,
    \end{equation*}
	and for $p<1$
	\begin{equation*}
		\lim_{\lambda\rightarrow\infty}\lambda^{-1}N(D_\Phi^2-\lambda V)=
		\frac1{4\pi} \int_{\R^2} V(x)_+\dd x \,.
	\end{equation*}
\end{theorem}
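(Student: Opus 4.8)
The plan is to reduce the two-dimensional counting problem to a sum of one-dimensional ones through the angular (logarithmic-coordinate) decomposition already used for Theorem~\ref{thm:ABweightedCLR}, and then to analyse the strong-coupling limit channel by channel; the three regimes $p>1$, $p=1$, $p<1$ will correspond to three different behaviours of the resulting sum over angular momenta. First I would reduce to a radial potential. Setting $W(r)=r^{-2}(\ln r)^{-1/p}$, the hypothesis provides, for each $\eps>0$, a radius $R_\eps$ with $(1-\eps)W(|x|)\le V(x)\le(1+\eps)W(|x|)$ for $|x|>R_\eps$, while $V$ is bounded on $\{|x|\le R_\eps\}$. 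By the variational principle I would sandwich $N(D_\Phi^2-\lambda V)$ between the counting functions of $(1\pm\eps)W$, modified on the compact set $\{|x|\le R_\eps\}$. For $p\ge1$ the leading order ($\lambda^p$ or $\lambda\ln\lambda$) dominates the $O(\lambda)$ contribution of any fixed compact region and the prefactors $(1\pm\eps)^p\to1$, so this reduction is harmless; for $p<1$ I would instead keep $V$ itself, since there the limit depends on all of $V$ through $\int V_+$.

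Passing to logarithmic coordinates as in Section~\ref{sec:proofweightedCLR}, for radial $V$ the operator decouples over angular momenta, and
\[
N(D_\Phi^2-\lambda V)=\sum_{n\in\Z}N_n(\lambda),\qquad N_n(\lambda)=N\!\left(-\partial_t^2+(n-\Phi)^2-\lambda\widetilde V\right),
\]
where $\widetilde V(t)=e^{2t}W(e^t)=t^{-1/p}(1+o(1))$ as $t\to+\infty$ and decays rapidly as $t\to-\infty$. Each channel is a one-dimensional Schrödinger operator, and I would invoke its strong-coupling asymptotics: with $m:=|n-\Phi|$,
\[
N_n(\lambda)=\frac1\pi\int_{\{\lambda\widetilde V>m^2\}}\sqrt{\lambda\widetilde V(t)-m^2}\,\dd t\;(1+o(1)),
\]
a one-dimensional semiclassical formula obtainable by Dirichlet--Neumann bracketing. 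Substituting $\widetilde V(t)=t^{-1/p}$ and $t=(\lambda/m^2)^p s$ evaluates the phase-space integral as $c_p\lambda^p m^{1-2p}$ with
\[
c_p=\frac1\pi\int_0^1\sqrt{s^{-1/p}-1}\,\dd s=\frac{\Gamma(p-1/2)}{2\sqrt{\pi}\,\Gamma(p)},
\]
the last step being a Beta-function identity. This already isolates the constant appearing in the theorem.

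It remains to sum over channels, which is where the cases separate. For $p>1$ the series $\sum_n|n-\Phi|^{1-2p}$ converges; after division by $\lambda^p$ the low channels contribute the limit $c_p\sum_n|n-\Phi|^{1-2p}$, and one must justify exchanging limit and sum. For $p=1$ one has $c_1=\tfrac12$ and the per-channel terms $\tfrac{\lambda}{2m}$ must be summed over all channels carrying a bound state, i.e.\ up to $m\sim\sqrt\lambda$; the harmonic sum $\sum_{m\lesssim\sqrt\lambda}\tfrac1m\sim\ln\lambda$ produces $\tfrac12\lambda\ln\lambda$, whereas the $O(\sqrt\lambda)$ threshold channels, where the tail asymptotics fails, contribute only $O(\lambda)=o(\lambda\ln\lambda)$. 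For $p<1$ the tail $\widetilde V$ is integrable and the sum of phase-space integrals is a Riemann sum converging to
\[
\frac2\pi\int_\R\int_0^\infty\sqrt{(\lambda\widetilde V(t)-m^2)_+}\,\dd m\,\dd t=\frac\lambda2\int_\R\widetilde V_+\,\dd t=\frac\lambda{4\pi}\int_{\R^2}V_+\,\dd x,
\]
which is the two-dimensional Weyl law.

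The main obstacle is the uniformity required to turn the per-channel asymptotics into a statement about their sum. For $p\le1$ this means controlling the one-dimensional remainder uniformly over the $\sim\sqrt\lambda$ relevant channels --- equivalently, that the rescaled semiclassical parameter $h=\lambda^{-p}m^{2p-1}$ tends to $0$ uniformly on the dominant range of $m$ --- so that the harmonic-sum and Riemann-sum passages are legitimate. For $p>1$ the delicate point is that the natural dominating bound, the strong $L^p$ estimate of Theorem~\ref{thm:ABweightedCLR}, \emph{diverges} for the critical potential, since $\int\widetilde V^p\,\dd t=\int t^{-1}\,\dd t=\infty$; moreover this divergence comes from the tail and is not cured by discarding low channels. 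I would instead bound the high-channel tail by restricting the operator to angular momenta $|n|>K$, on which the magnetic Hardy inequality~\eqref{eqn:ABHardy} holds with the larger constant $(K+1-\Phi)^2$ in place of $d(\Phi)^2$, and rerunning the interpolation argument of Corollary~\ref{cor:ABweightedCLRweak} on that subspace. This yields a bound for $\lambda^{-p}\sum_{|n|>K}N_n(\lambda)$ that is uniform in $\lambda$ and vanishes as $K\to\infty$, supplying the dominated-convergence input that completes the $p>1$ case.
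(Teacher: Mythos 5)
For $p\ge 1$ your proposal is, at its core, the paper's own argument reorganized: reduction to the model potential $W_p$, passage to logarithmic coordinates, Dirichlet--Neumann bracketing, and the identical Beta-function evaluation giving $c_p=\Gamma(p-1/2)/(2\sqrt{\pi}\,\Gamma(p))$. You count channel by channel (fixed $n$, then sum over $n$), whereas the paper brackets in the radial variable and counts the lattice points $(m,n)$ jointly with explicit two-sided bounds, which is precisely what lets it take $\lambda\to\infty$ and then $L\to\infty$ without ever needing a per-channel asymptotic that is uniform in $n$. Your high-channel tail bound for $p>1$ (Hardy on the subspace $|n|>K$, where the constant improves to $(K+1-\Phi)^2$, combined with the interpolation argument and the fact that $\sum_{|n|>K}|n-\Phi|^{-1-\beta}\to0$) is a workable substitute for the paper's direct estimate of the terms (I)--(IV), and your constant is correct. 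For $p=1$, however, the number of contributing channels grows like $\sqrt{\lambda}$, so the fixed-$n$ semiclassical formula you invoke does not suffice as stated; you name the required uniformity as ``the main obstacle'' but do not supply it, and closing it essentially forces you back to the joint bracketing the paper performs.

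The genuine gap is the case $p<1$. There the limit is $\frac{1}{4\pi}\int_{\R^2}V_+\,\dd x$ and depends on the full, generally non-radial potential. You correctly note that one must ``keep $V$ itself,'' but your method --- decomposition into angular momentum channels --- is only available for radial $V$, so the Riemann-sum computation you write down does not apply to the potentials covered by the theorem; your reduction to a radial comparison potential is also unavailable here, since the compact-region contribution is itself of the leading order $O(\lambda)$. Moreover, even for radial data, deducing the two-dimensional Weyl law from a sum of one-dimensional phase-space integrals over $\sim\sqrt{\lambda}$ channels is exactly the kind of statement that is delicate in $d=2$, where Weyl asymptotics can fail for $V\in L^1(\R^2)$, and some nontrivial external input is required. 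The paper proceeds differently: the lower bound comes from imposing a Dirichlet condition on a large ball and invoking Weyl asymptotics on bounded domains from \cite{FrLa}, while the upper bound splits off the radial tail $\widetilde W_p$, whose contribution is $\lesssim_{\Phi,p}(\ln R)^{1-1/p}\lambda$ by \eqref{eqn:ABCLRLaptev} with $(\ln R)^{1-1/p}\to0$ as $R\to\infty$ since $p<1$, leaving a compactly supported part handled by \cite{So}. Your proposal neither invokes such results nor offers a replacement, so the $p<1$ assertion is not proved.
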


In the theorem we clearly see the difference between the long range case $p\geq 1$ and the short range case $p<1$. In the former case the asymptotics are insensitive to the local behavior of $V$ and solely determined by its asymptotic behavior, while in the latter case they are essentially determined by the local behavior of $V$.

We note that if $V$ is as in the theorem with $p>1$, then with $\alpha =2(p-1)$
\begin{equation*}
	\lim_{\lambda\rightarrow\infty}\lambda^{-p}\sup_{t>0}t^{1+\alpha/2}\int_{\lambda |x|^2V(x)_+>t}\frac{\dd x}{|x|^2}=\sup_{t>0} t^{p}\int_{ |x|^2V(x)_+>t}\frac{\dd x}{|x|^2} \in (0,\infty) \,.
\end{equation*}
Therefore Theorem \ref{thm:ABWpasymp} shows that the weak bounds \eqref{eqn:ABweightedCLRweak} is saturated for the potentials $\lambda V$ as $\lambda\to\infty$.

Moreover, the asymptotics for $p=1$ show that one cannot expect to have a version of the weak inequality \eqref{eqn:ABweightedCLRweak} that is homogeneous of degree one in $V$. 

\begin{remark}
	For comparison, if $\Phi=0$ and $V$ is as in Theorem \ref{thm:ABWpasymp} with $p>1/2$ then $N(-\Delta-\lambda V)=\infty$ for all $\lambda>0$. The same holds for $p=1/2$ provided $\lambda>1/4$; see \cite[Proposition 4.21]{Frank2022SchrodingerInequalities}.
\end{remark}

\begin{proof}
	We mostly focus on the case $p\geq 1$ and discuss the case $p<1$ at the end. Let $W_p$ be defined as  
	\begin{align}\label{eqn:Wp}
		W_p(x) :=\begin{cases}
			|x|^{-2} (\ln|x|)^{-1/p}, &|x|>e \,,\\
			0,&|x|\leq e \,.
		\end{cases}
	\end{align}
	We will prove the theorem for $p\geq 1$ in the special case $V=W_p$. By simple approximation arguments, this implies the result in the general case.
	
We start by simplifying the problem. Consider the restriction of the operator $D_\Phi^2-\lambda W_p$ to the region $\{x\colon |x|>e\}$ with Dirichlet and Neumann boundary conditions, denoted by $H_{\Phi}^D(\lambda W_p)$ and $H_{\Phi}^N(\lambda W_p)$, respectively. Then, since $W_p\equiv 0$ for $|x|\leq e$, by the variational principle,
\begin{align}\label{eqn:bracketing}
    N(H_{\Phi}^D(\lambda W_p))\leq N(D_\Phi^2-\lambda W_p)\leq N(H_{\Phi}^N(\lambda W_p)) \,.
\end{align}
It follows, using logarithmic-coordinates $r=e^{t+1}$ and the definition of $W_p$, that we need only estimate the number of negative eigenvalues of the operator
\begin{equation*}
   -\partial_{t}^2+(\ii\partial_\theta-\Phi)^2-\lambda (t+1)^{-1/p} \text{ in } L^2((0,\infty)\times\Sph^1),
\end{equation*}
 from above and below, where the operator is considered with  Neumann and Dirichlet boundary conditions at $t=0$, respectively.

Now we carry out a further bracketing argument. We fix $L>0$ and for $k\in\N_0$ denote by $H_{k,L}^D(V)$ and $H_{k,L}^N(V)$ the restrictions of $-\partial_{t}^2+(\ii\partial_\theta-\Phi)^2-V(t)$ to the intervals $(kL,(k+1)L)$ with Dirichlet and Neumann boundary conditions respectively. Then, using $((k+1)L+1)^{-1/p}\leq (t+1)^{-1/p}\leq (kL+1)^{-1/p}$ on $(kL,(k+1)L)$,
\begin{align}
     N(H^D_{\Phi}(\lambda W_p))&\geq \sum_{k=0}^\infty N(H_{k,L}^D(\lambda (t+1)^{-1/p}))\geq   \sum_{k=0}^\infty N(H_{k,L}^D(\lambda ((k+1)L+1)^{-1/p})) \label{eqn:Wpboundbelow}
\end{align}
and
\begin{align}
     N(H^N_{\Phi}(\lambda W_p))&\leq \sum_{k=0}^\infty N(H_{k,L}^N(\lambda (t+1)^{-1/p}))\leq \sum_{k=0}^\infty N(H_{k,L}^N(\lambda (kL+1)^{-1/p})) \,. \label{eqn:Wpboundabove}
\end{align}
It remains to estimate each of these, where we first consider the case of $p>1$. Starting with the lower bound, we use \eqref{eqn:Wpboundbelow} to see that
\begin{align*}
    %&\sum_{k=0}^\infty N(H_{k,L}^D(\lambda ((k+1)L)^{-1/p}))\\
    N(H^D_{\Phi}(\lambda W_p))&\geq\sum_{k=0}^\infty \#\{(m,n)\in\N\times\Z\colon \frac{\pi^2 m^2}{L^2}+(n-\Phi)^2<\lambda((k+1)L+1)^{-1/p})\}\\
    %&=\sum_{m\in\N,n\in\Z}\#\{k\in\N_0\colon kL<\lambda^{p}\left(\frac{\pi^2 m^2}{L^2}+(n-\Phi)^2\right)^{-p}-L-1\}\\
    &\geq \sum_{m\in\N,n\in\Z}\left(L^{-1}\lambda^{p}\left(\pi^2 m^2/L^2+(n-\Phi)^2\right)^{-p}-1-L^{-1}\right)_{+}=\text{(I)}+\text{(II)} \,,
\end{align*}
where 
\begin{align*}
    \text{(I)}&=\sum_{m\in\N_0,n\in\Z}\left(L^{-1}\lambda^{p}\left(\pi^2 m^2/L^2+(n-\Phi)^2\right)^{-p}-1-L^{-1}\right)_{+}\\
    &\geq \lambda^p \sum_{n\in\Z}\int_{0}^\infty \left(\left(\pi^2 \tau^2+(n-\Phi)^2\right)^{-p}-\lambda^{-p}(L+1)\right)_{+}\dd \tau,
\end{align*}
and 
\begin{align*}
    \text{(II)}&=-\sum_{n\in\Z}\left(L^{-1}\lambda^{p}|n-\Phi|^{-2p}-1-L^{-1}\right)_{+}\\
    %&=-L^{-1}\lambda^{p}\sum_{n\in\Z}\left(|n-\Phi|^{-2p}-\lambda^{-p}(L+1)\right)_{+}\\
    &\geq -L^{-1}\lambda^{p}\sum_{n\in\Z}|n-\Phi|^{-2p}.
\end{align*}
Meanwhile, for the upper-bound \eqref{eqn:Wpboundabove} we find that
\begin{align*}
    N(H^N_{\Phi}(\lambda W_p)) & \leq \sum_{k=0}^\infty \#\{(m,n)\in\N_{0}\times\Z\colon \frac{\pi^2 m^2}{L^2}+(n-\Phi)^2<\lambda(kL+1)^{-1/p} \} \\
    & =\text{(III)}+\text{(IV)},
\end{align*}
where 
\begin{align*}
    \text{(III)}&= \#\{(m,n)\in\N_{0}\times\Z\colon \frac{\pi^2 m^2}{L^2}+(n-\Phi)^2<\lambda \} \\
    & \leq \#\{ n\in \Z\colon (n-\Phi)^2<\lambda \}
    + \sum_{n\in\Z} \pi^{-1} L \left( \lambda - (n-\Phi)^2 \right)_+^{1/2} \\
    & \leq (2\sqrt\lambda +1) + 2\pi^{-1} L (\lambda-\Phi^2)_+^{1/2} + \pi^{-1} L \int_\R (\lambda - (t-\Phi)^2)_+^{1/2}\dd t \\ 
    & = (2\sqrt\lambda +1) + 2\pi^{-1} L (\lambda-\Phi^2)_+^{1/2} + 2^{-1} L \lambda
\end{align*}
and 
\begin{align*}
    \text{(IV)}&= \sum_{k=1}^\infty \#\{(m,n)\in\N_{0}\times\Z\colon \frac{\pi^2 m^2}{L^2}+(n-\Phi)^2<\lambda(kL+1)^{-1/p} \} \\
    & \leq \sum_{m\in\N_0, n\in\Z} \left( L^{-1}\lambda^p \left(\pi^2m^2/L^2+(n-\Phi)^2 \right)^{-p} - L^{-1} \right)_+ \\
    &\leq \lambda^p\sum_{n\in\Z}\int_0^\infty\left(\left(\pi^2 \tau^2+(n-\Phi)^2\right)^{-p}- \lambda^{-p} \right)_{+}\dd \tau.
\end{align*}
Taking the limsup and liminf as $\lambda\to\infty$ and then the limit $L\to\infty$, we find
\begin{align*}
    \liminf_{\lambda\rightarrow\infty}\lambda^{-p}N(H^D_{\Phi}(\lambda W_p))&\geq\sum_{n\in\Z}\int_0^\infty \left(\pi^2 \tau^2+(n-\Phi)^2\right)^{-p}\dd \tau\\
    &=\frac{\Gamma(p-1/2)}{2\sqrt{\pi}\Gamma(p)}\sum_{n\in\Z}\frac{1}{|n-\Phi|^{2p-1}},
\end{align*}
and similarly
\begin{align*}
    \limsup_{\lambda\rightarrow\infty}\lambda^{-p}N(H^N_{\Phi}(\lambda W_p))
    &\leq \frac{\Gamma(p-1/2)}{2\sqrt{\pi}\Gamma(p)}\sum_{n\in\Z}\frac{1}{|n-\Phi|^{2p-1}}.
\end{align*}
This proves the claimed bound for $p>1$.

For the case of $p=1$, we carefully consider the terms that produce a logarithmic divergence. In this case, the choice of intervals does not matter, so we take $L=1$. We start by using \eqref{eqn:Wpboundbelow} to find that
\begin{align*}
     N(H^D_{\Phi}(\lambda W_1))
    &\geq \lambda \sum_{m\in\N,n\in\Z}\left(\left(\pi^2 m^2+(n-\Phi)^2\right)^{-1}-2\lambda^{-1}\right)_{+}\\
    &\geq \lambda \int_{\R\backslash(-1,1)}\int_{1}^\infty \left(\left(\pi^2 \tau^2+(t-\Phi)^2\right)^{-1}-2\lambda^{-1}\right)_{+}\dd \tau\dd t-O(\lambda)\\
    %&\geq \lambda \pi^{-1}\int_{-\infty}^\infty \int_{\pi}^\infty \left(\left(\tau^2+t^2\right)^{-1}-2\lambda^{-1}\right)_{+}\dd \tau\dd t-O(\lambda)\\
    &\geq \lambda (2\pi)^{-1}\iint_{\sigma^2+s^2>R_1^2} \left(\left(\sigma^2+s^2\right)^{-1}-2\lambda^{-1}\right)_{+}\dd \sigma\dd s-O(\lambda),
\end{align*}
with $R_1^2 := (\pi^2 + (1-\Phi)^2)/2$. When passing to the last line we increased the region of integration in the first term, noting that additional integral is $O(\lambda)$. For the upper bound \eqref{eqn:Wpboundabove}, in the decomposition above, the term $\text{(III)}$ is of order $O(\lambda)$ as $\lambda\rightarrow \infty$, thus we see that
\begin{align*}
    N(H^N_{\Phi}(\lambda W_1))&\leq O(\lambda)+\text{(IV)}\\
    &=\sum_{m\in\N_0,n\in\Z}\#\{k\in \N\colon k<\lambda \left(\pi^2 m^2+(n-\Phi)^2\right)^{-1}-1\}+O(\lambda)\\
    %&\leq \lambda \sum_{m\in\N_0,n\in\Z\backslash\{0\}}\left( \left(\pi^2 m^2+(n-\Phi)^2\right)^{-1}-\lambda^{-1}\right)_{+}+O(\lambda)\\
    &= \lambda \sum_{m\in\N\backslash\{1\},n\in\Z}\left( \left(\pi^2 m^2+(n-\Phi)^2\right)^{-1}-\lambda^{-1}\right)_{+}+O(\lambda)\\
    &\leq \lambda (2\pi)^{-1} \iint_{\sigma^2+s^2>R_2^2} (( \sigma^2+s^2)^{-1}-\lambda^{-1})_{+}\dd\sigma\dd s+O(\lambda)
\end{align*}
with $R_2^2:=\pi^2 + (1-\Phi)^2$. For $R=R_1,R_2$ we compute
$$
\iint_{\sigma^2+s^2>R^2} (( \sigma^2+s^2)^{-1}-\lambda^{-1})_{+}\dd\sigma\dd s
= 2\pi \int_R^{\sqrt\lambda} (r^{-2} - \lambda^{-1}) r\dd r = \frac12\ln\lambda + O(1) \,.
$$
This proves the claimed bound for $p=1$.

Finally, we comment on the case $p<1$. We clearly have
\begin{align}
	\label{eq:weylupper}
	\liminf_{\lambda\to\infty} \lambda^{-1} N(D_\Phi^2-\lambda V) \geq \frac{1}{4\pi} \int_{\R^2} V(x)_+\,dx \,.
\end{align}
Indeed, for given, sufficiently large $R>0$ we bound $V\geq V\1(|x|<R)$ (here we use that $V$ is nonnegative outside of a bounded set) and then impose a Dirichlet condition at $|x|=R$ to bound $N(D_\Phi^2-\lambda V)$ from below by the number of negative eigenvalues of the corresponding Dirichlet operator on $\{|x|<R\}$. By \cite[Corollary 1.2]{FrLa} for the latter operator one has Weyl asymptotics. Since $R>0$ can be chosen arbitrarily large, we obtain \eqref{eq:weylupper}.

To prove
\begin{align}
	\label{eq:weyllower}
	\limsup_{\lambda\to\infty} \lambda^{-1} N(D_\Phi^2-\lambda V) \leq \frac{1}{4\pi} \int_{\R^2} V(x)_+\,dx \,,
\end{align}
we set, for all $R>1$, $\widetilde W_p(x) = \1(|x|>R) |x|^{-2} (\ln |x|)^{-1/p} + \1(|x|\leq R) R^{-2}(\ln R)^{-1/p}\}$. For $0<\theta<1$ and $\epsilon>0$ we decompose
$$
D_\Phi^2 -\lambda V = \theta \left(D_\Phi^2 - \theta^{-1}(1+\epsilon)\lambda \widetilde W_p \right) + (1-\theta) \left( D_\Phi^2 -(1-\theta)^{-1}\lambda(V-(1+\epsilon) \widetilde W_p) \right)
$$
and obtain
$$
N(D_\Phi^2 - \lambda V) \leq N(D_\Phi^2 - \theta^{-1}(1+\epsilon)\lambda \widetilde W_p) + N(D_\Phi^2 -(1-\theta)^{-1}\lambda(V-(1+\epsilon) \widetilde W_p)_+) \,.
$$
Since $\widetilde W_p$ is radially nonincreasing, it results from either \eqref{eqn:ABCLRBalinsky} or \eqref{eqn:ABCLRLaptev} that
$$
N(D_\Phi^2 - \theta^{-1}(1+\epsilon)\lambda \widetilde W_p)
\lesssim_\Phi \theta^{-1}(1+\epsilon)\lambda \int_{\R^2} \widetilde W_p\dd x
\lesssim_{\Phi,p}\theta^{-1}(1+\epsilon) (\ln R)^{1-1/p} \lambda \,.
$$
Meanwhile, by assumption there is an $R_\epsilon<\infty$ such that for all $|x|\geq R_\epsilon$ one has $V(x)\leq (1+\epsilon)|x|^{-2} (\ln |x|)^{-1/p}$. Therefore, the potential $(V-(1+\epsilon) \widetilde W_p)_+$ is supported in a ball and with the help of \cite{So} one finds
$$
\lim_{\lambda\to\infty} \lambda^{-1} N(D_\Phi^2 -(1-\theta)^{-1}\lambda(V-(1+\epsilon) \widetilde W_p)_+) = \frac{1}{4\pi} (1-\theta)^{-1} \int_{\R^2} (V-(1+\epsilon) \widetilde W_p)_+ \dd x \,.
$$
Thus, we have shown that
$$
\limsup_{\lambda\to\infty} \lambda^{-1} N(D_\Phi^2 - \lambda V) \leq \frac{1}{4\pi} (1-\theta)^{-1} \int_{\R^2} \! (V-(1+\epsilon) \widetilde W_p)_+ \dd x + C_{\Phi,p} \theta^{-1}(1+\epsilon) (\ln R)^{1-1/p}.
$$
Letting $R\to\infty$ using the integrability of $V$ and $p<1$, we obtain
$$
\limsup_{\lambda\to\infty} \lambda^{-1} N(D_\Phi^2 - \lambda V) \leq \frac{1}{4\pi} (1-\theta)^{-1} \int_{\R^2} V_+ \dd x \,.
$$
Since $\theta\in(0,1)$ is arbitrary, we obtain \eqref{eq:weyllower}. This concludes the proof.
\end{proof}

\begin{remark}
	Let us use Theorem \ref{thm:ABWpasymp} to prove the lower bound \eqref{eq:abequivweak} on $C_{\Phi,\alpha}'$. Let $\alpha>0$ and $p=1+\alpha/2>1$, then for $W_p$ as in the proof of Theorem \ref{thm:ABWpasymp}
	\begin{equation*}
		\lim_{\lambda\rightarrow \infty} \lambda^{-p} \sup_{t>0}t^{p}\int_{\lambda W_p|x|^2>t}\frac{\dd x}{|x|^2}=\sup_{t>0}t^{p}\int_{ W_p|x|^2>t}\frac{\dd x}{|x|^2}=2\pi,
	\end{equation*}
	and thus, by the asymptotic formula in Theorem \ref{thm:ABWpasymp},
	\begin{equation*}
		C_{\Phi,\alpha}^\prime\geq \lim_{\lambda\rightarrow \infty}\frac{N(D_\Phi^2-\lambda W_p)}{\sup_{t>0}t^{p}\int_{\lambda W_p|x|^2>t}\frac{\dd x}{|x|^2}}=\frac{\Gamma(\alpha/2+1/2)}{4\pi^{3/2}\Gamma(1+\alpha/2)}\sum_{n\in\Z}|n-\Phi|^{-1-\alpha}.
	\end{equation*}
	This proves \eqref{eq:abequivweak}.
\end{remark}

Finally, we note that the corresponding results hold in the antisymmetric case by near identical argument. We state them below without proof.

\begin{theorem}
	\label{thm:antiWpasymp}
	Let $p\geq 1$ and let $V$ be as in Theorem \ref{thm:ABWpasymp}.	Then for $p>1$
	\begin{equation*}
		\lim_{\lambda\rightarrow\infty}\lambda^{-p}N(-\DeltaAnti-\lambda V ) = \frac{\Gamma(p-1/2)}{\sqrt{\pi}\Gamma(p)} \,\zeta(2p-1)
	\end{equation*}
	and for $p=1$
	\begin{equation*}
		\lim_{\lambda\rightarrow\infty}(\lambda\ln\lambda)^{-1}N(-\DeltaAnti-\lambda V)=\frac{1}{2} \,.
	\end{equation*}
\end{theorem}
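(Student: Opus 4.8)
The plan is to transcribe the proof of Theorem~\ref{thm:ABWpasymp}, the only essential change being that the magnetic angular operator is replaced by the angular operator attached to the antisymmetry. First I would reduce to the model potential $V=W_p$ from \eqref{eqn:Wp} by the same approximation argument and pass to polar and logarithmic coordinates. As already recorded in the proof of Theorem~\ref{thm:antiweightedCLR}, this identifies $-\DeltaAnti-\lambda W_p$ on $L^2_{\mathbf{as}}(\R^2)$ with $-\partial_t^2-\partial_\theta^2-\lambda(t+1)^{-1/p}$ acting on the antisymmetric subspace $\{u(t,\theta)=-u(t,\pi/2-\theta)\}$ of $L^2(\R\times\Sph^1)$. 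On this subspace $-\partial_\theta^2$ diagonalizes in the harmonics $\phi_n(\theta)=\pi^{-1/2}\sin(n(\theta-\pi/4))$, $n\in\N$, with simple eigenvalues $n^2$. Thus the role played by the spectrum $\{(n-\Phi)^2\}_{n\in\Z}$ of $(\ii\partial_\theta-\Phi)^2$ in Theorem~\ref{thm:ABWpasymp} is now taken over by $\{n^2\}_{n\in\N}$, while the radial variable $t$ and its entire treatment are untouched.

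For $p>1$ I would run the identical two-sided Dirichlet--Neumann bracketing. Restricting to $\{|x|>e\}$ and decomposing $t\in(0,\infty)$ into intervals $(kL,(k+1)L)$ on which the potential is frozen to $\lambda((k+1)L+1)^{-1/p}$ and $\lambda(kL+1)^{-1/p}$, the count on each block reduces to counting lattice points $(m,n)\in\N\times\N$ with $\pi^2m^2/L^2+n^2$ below the frozen value. Summing over $k$, sending $\lambda\to\infty$ and then $L\to\infty$ exactly as in \eqref{eqn:Wpboundbelow}--\eqref{eqn:Wpboundabove}, the Riemann sum converges to $\sum_{n\in\N}\int_0^\infty(\pi^2\tau^2+n^2)^{-p}\dd\tau$. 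The radial integral is evaluated by the same Beta-function identity as in Theorem~\ref{thm:ABWpasymp}, supplying the Gamma-function prefactor; summing $n^{1-2p}$ over $n\in\N$ replaces the flux sum $\sum_{n\in\Z}|n-\Phi|^{1-2p}$ by $\zeta(2p-1)$, which yields the stated limit.

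For $p=1$ I would follow the logarithmic bookkeeping of Theorem~\ref{thm:ABWpasymp}, taking $L=1$: the non-divergent contributions are of order $O(\lambda)$, and the divergent part is isolated by comparing the lattice sum over $(m,n)\in\N\times\N$ with a two-dimensional integral over an exterior region $\{\sigma^2+s^2>R^2\}$, using the computation of $\iint_{\sigma^2+s^2>R^2}((\sigma^2+s^2)^{-1}-\lambda^{-1})_+\dd\sigma\dd s$ up to $O(1)$. Matching the upper and lower brackets then gives the $(\lambda\ln\lambda)$-asymptotics. Since the statement is restricted to $p\geq1$, no separate short-range analysis (the $p<1$ part of Theorem~\ref{thm:ABWpasymp} relying on \eqref{eqn:ABCLRLaptev}, \cite{So} and \cite{FrLa}) is needed here.

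The one genuinely new ingredient is the angular spectral sum, so I do not expect a structural obstacle. The step requiring the most care, and where I expect the real work to sit, is the correct determination of the multiplicative constant under the change in the angular density of states: for the antisymmetric operator the admissible angular momenta run over $n\in\N$ with the single eigenvalue $n^2$, rather than over $n\in\Z$ as in the magnetic case, so the angular lattice is effectively halved while the radial phase space is unchanged, and this must be tracked faithfully through the Weyl counting to obtain the correct constant together with the factor $\zeta(2p-1)$. As in the magnetic case, the remaining delicate points are the uniform control of the lattice-sum-to-integral passage and the clean separation of the logarithmic term when $p=1$.
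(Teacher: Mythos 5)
Your strategy is exactly the one the paper intends: the paper gives no proof of Theorem \ref{thm:antiWpasymp} at all, stating only that it follows ``by near identical argument'' from Theorem \ref{thm:ABWpasymp}, and your plan --- reduce to $W_p$, pass to logarithmic coordinates, replace the angular spectrum $\{(n-\Phi)^2\}_{n\in\Z}$ of $(\ii\partial_\theta-\Phi)^2$ by the simple spectrum $\{n^2\}_{n\in\N}$ of $-\partial_\theta^2$ on the antisymmetric harmonics $\phi_n(\theta)=\pi^{-1/2}\sin(n(\theta-\pi/4))$, and rerun the Dirichlet--Neumann bracketing --- is the correct transcription.

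However, the step you yourself single out as ``where the real work sits'' is precisely where your writeup asserts rather than computes, and the assertion does not come out as claimed. Carrying your outline through faithfully, each angular channel $\mu$ contributes
\[
\int_0^\infty \bigl(\pi^2\tau^2+\mu^2\bigr)^{-p}\dd\tau \;=\; \frac{\Gamma(p-1/2)}{2\sqrt{\pi}\,\Gamma(p)}\,\mu^{1-2p},
\]
exactly as in the magnetic case, so summing over the admissible channels gives
\[
\lim_{\lambda\to\infty}\lambda^{-p}N(-\DeltaAnti-\lambda W_p)\;=\;\frac{\Gamma(p-1/2)}{2\sqrt{\pi}\,\Gamma(p)}\sum_{n\in\N}n^{1-2p}\;=\;\frac{\Gamma(p-1/2)}{2\sqrt{\pi}\,\Gamma(p)}\,\zeta(2p-1),
\]
which is \emph{half} the constant in the statement; the stated value $\frac{\Gamma(p-1/2)}{\sqrt{\pi}\Gamma(p)}\zeta(2p-1)$ would correspond to summing over $n\in\Z\setminus\{0\}$, i.e.\ to giving each eigenvalue $n^2$ multiplicity two, whereas in the antisymmetric subspace each $n^2$ is simple (only the $\sin$ combination survives the reflection $\theta\mapsto\pi/2-\theta$). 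The same halving occurs for $p=1$: your lattice sum now runs over the quarter-plane $\N\times\N$ rather than the half-plane $\N\times\Z$, so the comparison integral acquires a prefactor $(4\pi)^{-1}$ instead of $(2\pi)^{-1}$ and the limit comes out as $1/4$, not $1/2$. So either there is a source of an extra factor $2$ that neither you nor the paper's one-line justification identifies, or the constants in the statement (and the corresponding lower bound claimed for $C_\alpha'$, which is derived from them) are off by a factor of $2$; as written, your proposal does not close this discrepancy, since it claims the substitution ``yields the stated limit'' without performing the channel count that in fact contradicts it.
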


\iffalse
\begin{lemma}
    The sharp constant in the weak weighted CLR inequality \eqref{eqn:AntiweightedCLRweak} satisfies $\lim_{\alpha\rightarrow 0_+}\alpha C^\prime_\alpha=1/4\pi$.
\end{lemma}
\fi

\subsection*{Acknowledgments}
Partial support through US National Science Foundation grant DMS-1954995 (R.L.F), as well as through the Excellence Strategy of the German Research Foundation grant EXC-2111-390814868 (R.L.F.) is acknowledged.

%\bibliography{references.bib}
%\bibliographystyle{plain}

\end{document}